\newtheorem{theorem}{Theorem}[section]
\newtheorem{corollary}{Corollary}[theorem]
\newtheorem{proposition}{Proposition}[theorem]
\newtheorem{lemma}[theorem]{Lemma}
\newtheorem{remark}{Remark}
\newcommand{\sumNCG}{\textsc{sum ncg}\xspace}
\newcommand{\NE}{\textsc{ne}\xspace}
\newcommand{\PoA}{\text{PoA}\xspace}
\newcommand{\remove}[1]{}
\newtheorem*{theorem*}{Theorem}
\title{New Insights into the Structure of Equilibria for the Network Creation Game}
\author[1]{\`Alvarez, C. \thanks{alvarez@cs.upc.edu}}
\affil[1]{ALBCOM Research Group, Computer Science Department, UPC, Barcelona}
\author[2]{Messegu\'e, A. \thanks{amessegue@cs.upc.edu}}%
\affil[2]{ALBCOM Research Group, Computer Science Department, UPC, Barcelona}
\date{}                     
\begin{document}

\maketitle              

\begin{abstract}

We study the \emph{sum classic network creation game} introduced by Fabrikant et al. in which $n$ players conform a network buying links at individual price $\alpha$. When studying this model we are mostly interested in \emph{Nash equilibria} (\NE) and the \emph{Price of Anarchy} (\PoA). It is conjectured that the \PoA is constant for any $\alpha$.  Up until now, it has been proved constant \PoA for the range  $\alpha = O(n^{1-\delta_1})$ with $\delta_1>0$ a positive constant, upper bounding by a constant the diameter of any \NE graph jointly with the fact that the diameter of any \NE graph plus one unit is an upper bound for the $\PoA$ of the same graph. Also, it has been proved constant \PoA for the range $\alpha >n(1+\delta_2)$ with $\delta_2>0$ a positive constant, studying extensively the average degree of any biconnected component from equilibria.

Our contribution consists in proving that \NE graphs satisfy very restrictive topological properties generalising some properties proved in the literature and providing new insights that might help settling the conjecture that the \PoA is constant for the remaining range of $\alpha$:

(i) We show that \emph{every} node has the majority of the other nodes of the \NE graph at the \emph{same} narrow range of distances. 
In other words, the distance-uniformity property introduced by Alon et al. when studying the sum basic network creation game is present in the sum classic network creation game, too. 


(ii) If instead of considering the average degree, we focus on the number of non-bridge links bought for any player, we can prove that there exists a constant $D$ such that the average degree is upper bounded by $\max(R, \frac{2n}{\alpha}+6)$ for any \NE graph of diameter larger than $D$, where $R$ is a positive constant. 

\end{abstract}










\section{Introduction}
\label{Introduction} 

The starting point of this article is the seminal model introduced by Fabrikant et al. in \cite{Fe:03} which we call the \emph{sum classic network creation game}. This model can be viewed as a strategic game modelling Internet-like networks without central coordination. In this model the distinct agents, who can be thought as nodes in a graph, buy links of prefixed price $\alpha$ to the other agents in order to be connected in the resulting network of size $n$. We analyse the structure of the resulting equilibrium networks, which are called \emph{Nash equilibria} (\NE) as well as the concept of the \emph{Price of Anarchy} (\PoA). Nash equilibria are configurations such that every agent (or node) is not interested in deviating from his current strategy, and the \PoA can be seen as a measure of how the efficiency of the system degrades due to selfish behaviour of its agents. It was conjectured that the \PoA is constant for any $\alpha$ \cite{Demaineetal:07} and this conjecture, which we call the \emph{constant \PoA conjecture}, has been validated for the range $\alpha = O(n^{1-\delta_1})$ with $\delta_1 \leq 1/\log n$ \cite{Demaineetal:07} and for the range $\alpha >n(1+\delta_2)$ for any positive constant $\delta_2$, \cite{Alvarezetal3}. These results have been achieved following incremental steps after a large amount of new original papers. What happens in the remaining range of the parameter $\alpha$? The best upper bound known for the \PoA in the remaining range is $2^{O(\sqrt{\log n})}$ but apart from this, very little information is known regarding equilibria for the same range of $\alpha$. In this paper we show further topological properties of equilibria that can help to better understand what such networks look like and our contribution may help to shed light onto the constant \PoA conjecture as well.

Let us first define formally the model and related concepts.

\subsection{Model and definitions} 
 The \emph{sum classic network creation game} $\Gamma$ is defined by  a pair $\Gamma= (V,\alpha)$ where $V = \left\{1,2,....,n \right\}$  denotes the set of players and  $\alpha>0$ is a positive parameter. Each player $u\in V$ represents a node of an undirected graph and $\alpha$ represents the cost of establishing a link.
 
 A \emph{strategy} of a player $u$ of $\Gamma$  is a subset $s_u \subseteq V \setminus \left\{u \right\}$, the set of nodes for which player $u$ pays for establishing a link. A strategy profile for $\Gamma$ is a tuple $s=(s_1,\ldots,s_n)$ where $s_u$ is the strategy of player $u$, for each player $u\in V$. Let $\mathcal{S}$ be the set of all strategy profiles of $\Gamma$. Every strategy profile $s$ has associated a \emph{communication network} that is defined as the undirected graph $G[s] = (V, \left\{uv \mid v \in s_u \lor u \in s_v \right\})$. Notice that $uv$ denotes the undirected edge between $u$ and $v$. 
 
 
Let $d_G(u,v)$ be the distance in $G$ between $u$ and $v$. The cost associated to a player $u \in V$  in a strategy profile $s$ is defined by $c_u(s) =  \alpha |s_u| + D_{G[s]}(u)$ where   $D_G(u) = \sum_{v\in V, v \neq u} d_{G}(u,v)$ is the sum of the distances from the player $u$ to all the other players in $G$. As usual, the social cost of a strategy profile $s$ is defined by  $C(s)= \sum_{u \in V}{c_u(s)}$. 
 
 A Nash Equilibrium (\NE) is a strategy vector $s$ such that  for every player $u$ and every strategy vector $s'$ differing from $s$ in only the $u$ component, $s_u \neq s_u'$,  satisfies $c_u(s) \leq c_u(s')$. In a \NE $s$ no player has incentive to deviate individually her strategy since the cost difference $c_u(s')-c_u(s) \geq 0$. 
  Finally, let us denote by $\mathcal{E}$ the set of all \NE strategy profiles. The  price of anarchy (PoA) of $\Gamma$  is defined as $PoA= \max_{s \in \mathcal{E}} 
 C(s)/\min_{s \in \mathcal{S}}  C(s)$. 

It is worth observing that in  a \NE $s=(s_1,...,s_n)$ it never happens that $u\in s_v$ and $v\in s_u$, for any $u,v\in V$. Thus, if $s$ is a \NE, $s$ can be seen as an orientation of the edges of $G[s]$ where an arc from $u$ to $v$ is placed whenever $v \in s_u$. It is clear that a \NE $s$ induces a graph $G[s]$ that we call  \emph{NE graph} and we mostly omit the reference to such strategy profile $s$ when it is clear from context. However, notice that a graph $G$ can have different orientations. Hence, when we say that $G$ is a \NE graph we mean that $G$ is the outcome of a \NE strategy profile $s$, that is, $G=G[s]$. 

Given a graph $G$ we denote by $X \subseteq G$ the subgraph of $G$ induced by $V(X)$. In this way, given a graph  $G=G[s] = (V,E)$, a node $v \in V$,  and $X \subseteq G$, the \emph{outdegree of} $v$ in $X$ is defined as $deg_X^+(v)= | \left\{ u \in V(X) \mid u \in s_v\right\}| $ and the \emph{degree of} $v$ in $X$ as $deg_X(v) = |\left\{ u \in V(X) \mid  uv \in E \right\}|$.  Furthermore, the average degree of $X$ is defined as $deg(X)= \sum_{v \in V(X)}deg_X(v)/|V(X)|$.

For a distance index $r$ and a node $u \in V(G)$, $A_r(u)$ is defined as the set of nodes from $V(G)$ at distance exactly $r$ from $u$. 

Furthermore, recall that in  a connected graph $G=(V,E)$ a  vertex is a \emph{cut vertex} if its removal increases the number of connected components of $G$. A graph is biconnected if it has no cut vertices. We say that $H \subseteq G$ is a \emph{biconnected component} of $G$  if $H$ is a maximal biconnected subgraph of $G$. More specifically, $H$ is such that there is no other distinct biconnected subgraph of $G$ containing $H$ as a subgraph. Given a biconnected component $H$ of  $G$ and a node  $u \in V(H)$, we define $S(u)$ as the connected component containing $u$ in  the subgraph induced by the vertices $(V(G)\setminus V(H)) \cup \left\{u \right\}$. Notice that $S(u)$ denotes the set of all nodes $v$ in the connected component containing $u$ induced by $(V(G) \setminus V(H)) \cup \{u\}$ and then, every shortest path in $G$ from $v$ to any node $w \in V(H)$ goes through $u$. 

In the following sections we consider $G$ to be a \NE   and $H \subseteq G$, if it exists, a non-trivial biconnected component of $G$, that is, a biconnected component of $G$ of at least three distinct nodes. Then we use the abbreviations $d,d_H$ to refer to the diameter of $G$  and the diameter of $H$, respectively, (although $d_G(u,v)$ denotes the distance between $u,v$ in $G$). 
\vskip 10pt

\subsection{Historical overview} 
It is well known that the diameter of any \NE graph $G$ plus one unit serves as an upper bound for the \PoA of that graph \cite{Demaineetal:07}. This is the main result used in \cite{Demaineetal:07} to reach the conclusion that the \PoA is constant for the range $\alpha = O(n^{1-\delta_1})$ with $\delta_1 \geq 1/\log n$. Moreover, for the range $\alpha < 12n \log n$ it is shown in the same paper that the diameter of equilibria and thus the \PoA, is at most $2^{O\left(\sqrt{\log n} \right)}$. Subsequently, Alon et al. in \cite{AlonDHKL14} analyse an equilibrium notion that is related to the sum classic network creation game when we restrict only to single edge swaps. This new equilibrium concept is called the \emph{sum basic equilibria} and is defined as follows. An undirected graph $G$ is said to be a sum basic equilibrium if any node $v \in V(G)$ cannot perform an individual swap of any edge $vw$ for any other edge $vw'$ in such a way that it strictly decreases the average distance to all the other nodes from $v$.  In an attempt to improve the bounds obtained in the sum classic network creation game, Alon et al. propose to study the diameter of equilibria for this new equilibrium notion.  In the paper, they first show an upper bound  of $2^{O(\sqrt{\log n})}$, which is not better than the one for the sum classic network creation game. However, the authors introduce a novel concept, the so-called notion of \emph{distance-uniformity}, and they establish a connection between large-diameter sum basic equilibria and distance-uniform graphs that, at first glance, could lead to an improvement for the bound on the diameter. Roughly speaking, an undirected graph $G$ is said to be \emph{distance-uniform} (or \emph{distance-almost-uniform}) iff the majority of the nodes from the network are contained in  the same distance layer (or in the same two consecutive distance layers) for each node that we consider, respectively. The key details in this definition are that the property must be satisfied for every node in the network and, moreover, that the distance index of the layer is, for any node considered, the same in a distance-uniform graph or approximately the same (two consecutive values) in any distance-almost-uniform graph. More precisely, the authors prove that there exist exponents $x,y$ such that the $x$ and $y$ powers of sum basic equilibria are distance-uniform and distance-almost-uniform graphs, respectively. From this it is not hard to see that if the diameter of distance-uniform graphs was logarithmic then the diameter of sum basic equilibria would we polylogarithmic. Unfortunately, this statement, that distance-uniform graphs have logarithmic diameter, was proposed as a conjecture but was refuted later in \cite{LaLo:17}. The first main result we prove in this paper is precisely an analogous relationship between equilibria for the sum classic network creation game and distance-uniform graphs which can be stated in a very similar form as it is done in \cite{AlonDHKL14}. 


\vskip 5pt

Jointly with all these results, it is well-known that the \PoA of trees is less than $5$ \cite{Fe:03}. This  combined with the \PoA-diameter relation, the authors from \cite{Alvarezetal3} reach the conclusion that the \PoA is constant for the range $\alpha > n(1+\delta_2)$ by showing that the diameter of any biconnected component of a non-tree \NE graph is upper bounded by a constant. More specifically, the authors study the average degree of any non-trivial biconnected component $H$, noted as $deg(H)$, for the corresponding range of $\alpha$. In order to do so, they prove upper and lower bounds for the term $deg(H)$ which contradict each other when $n, \alpha$, the size of $H$ and the diameter of $H$ satisfy some conditions. From this fact the conclusion follows easily. This technique, the study of upper and lower bounds for the $deg(H)$, had already been used to prove that every \NE is a tree and thus the \PoA is constant in \cite{Mihalakmostly,Mihalaktree,Alvarezetal} for the ranges $\alpha > 273n, \alpha > 65n, \alpha> 17n$, respectively. Furthermore, from Theorem 4 in \cite{Albersetal:06} it can be deduced that, for a general $\alpha$, the average degree of $H$ is no greater than $2n/\alpha+4$ and in \cite{Alvarezetalconstant} it is shown that $deg_H^+(v)$ is upper bounded by a constant when $\alpha > n$. The second main result of this paper upper bounds the maximum number of links from $H$ that any player in an equilibrium can have.  As we shall see later, as a consequence of this, we prove a non-trivial upper bound for the number of non-bridge links bought by every player from any \NE graph for the same range of $\alpha$ having diameter larger than some constant.



\subsection{Our contribution}

We study key properties that provide new insights regarding how equilibria look like for the mysterious range of $\alpha$ where it is not known wether the \PoA is constant.

(i) In Sect. \ref{sec:distance-uniformity}, we see that for any  $\epsilon >0$ small enough there exists  a distance index $r_{\epsilon}$ and a quantity $x_{\epsilon} = (4\alpha/n)\epsilon^{-1} +1$ such that any node $u$ has at least $n(1-\epsilon)$ of the nodes from $G$ at a distance exactly in the interval $[ r_{\epsilon}-x_{\epsilon},r_{\epsilon}+x_{\epsilon}]$. In other words, the $2x_{\epsilon}$-th power of equilibria satisfy the notion of distance-uniformity introduced by Alon et al. for the sum basic network creation game.

(ii) In Sect. \ref{sec:ramsey} we show that there exists a constant $R$ such that the number of non-bridge links bought for any player in any equilibrium graph of diameter larger than a constant is at most $\max(R, \frac{2n}{\alpha}+6)$. More precisely, we show that any such equilibrium has at most one biconnected component $H$, and then we show that $deg_H^+(v) \leq \max(R, \frac{2n}{\alpha}+6)$, for any node $v\in V(H)$.  We prove this result in several steps. Our starting point is $G$ a \NE graph. In Subsect. \ref{subsec:biconnected}  we show that if $G$ has diameter larger than a constant it contains at most one biconnected component. Then, in Subsect. \ref{subsec:rough} we show a rough estimation $deg_H^+(v) = O((n/\alpha)^2)+O(1)$ for any biconnected component $H$ from $G$. Finally, in Subsect. \ref{subsec:precise}, using the previous estimation, we show the more precise estimation  $deg_H^+(v) \leq 2n/\alpha+6$ when the diameter of $G$ and $n/\alpha$ are larger than a constant for any biconnected component $H$ of $G$.

\section{Distance-uniformity and Nash equilibria}
\label{sec:distance-uniformity}

Given a graph $G$, recall that $A_i(u)$ is defined as the subset of nodes at distance exactly $i$ from $u \in V(G)$. Then, given $\epsilon > 0$, we say that a graph $G$ is \emph{$\epsilon-$distance-uniform} (or \emph{$\epsilon-$distance-almost-uniform}) iff there exists a distance index $r$ which we call the \emph{critical distance} such that $|A_r(u)| \geq n(1-\epsilon)$ (or $\max_{i \in \left\{ r,r+1\right\}}|A_i(u)| \geq n(1-\epsilon)$) for all $u \in V(G)$.  Intuitively, a distance-uniform graph is a graph for which \emph{every} node has at an approximately the \emph{same} distance the majority of the nodes from the graph. This concept was introduced by Alon et al. in \cite{AlonDHKL14} when studying the sum basic network creation game. In the same paper, the authors establish a connection between equilibria for the sum basic network creation game with distance-uniform graphs by showing that some non-trivial power of any sum basic equilibria is a distance-uniform graph: 

\begin{theorem}\label{thm:alon}\cite{AlonDHKL14} Any sum (basic) equilibrium graph $G$ with $n \geq 24$ vertices and diameter $d > 2 \log n$ induces an $\epsilon-$distance-almost-uniform graph $G'$ with $n$ vertices and diameter $\Theta(\epsilon d/ \log n)$ and an $\epsilon-$distance-uniform graph $G'$ with $n$ vertices and diameter $\Theta(\epsilon d / \log^2 n)$. 

\end{theorem}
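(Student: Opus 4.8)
The plan is to follow the method of Alon et al.~\cite{AlonDHKL14}, whose engine is a \emph{ball-growth lemma} for sum basic equilibria. I would first record the local consequence of the equilibrium condition: if $vw\in E(G)$ and $T$ is the set of vertices $x$ all of whose shortest $v$--$x$ paths use the edge $vw$, then the fact that $v$ has no profitable swap of $vw$ for any $vw'$ forces, in quantitative form, that $w$ is close to a \emph{median} of $T$ inside $G$ --- replacing $vw$ by $vw'$ changes only the distances from $v$ to the vertices of $T$, and these change by essentially $d_G(w',x)-d_G(w,x)$. From this one distils the ball-growth lemma: for every vertex $u$ and every radius $r$ at which the ball $\{x:d_G(u,x)\le r\}$ is neither smaller than $\epsilon n$ nor larger than $(1-\epsilon)n$, the ball size increases by a fixed multiplicative factor after a bounded number of further steps, and dually the co-ball $\{x:d_G(u,x)>r\}$ shrinks by a fixed factor. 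Since ball sizes are monotone in $r$, the radii at which the ball of $u$ has size in the ``middle band'' $[\epsilon n,(1-\epsilon)n]$ form an interval whose length the lemma bounds by some $L=\mathrm{polylog}(n)/\epsilon$; hence, taking $r_u$ to be the least radius with $|\{x:d_G(u,x)\le r\}|\ge\epsilon n$, all but at most $2\epsilon n$ vertices lie at distance within the window $[\,r_u,\,r_u+L\,]$ of $u$.

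Next I would pass to powers of $G$. For the $t$-th power $G'$ of $G$ one has $d_{G'}(u,x)=\lceil d_G(u,x)/t\rceil$, so with $t=\Theta(L)$ the width-$t$ window $[\,r_u,\,r_u+L\,]$ is mapped into at most two consecutive layers $\{a_u,a_u+1\}$ of $G'$, and the diameter of $G'$ is $\lceil d/t\rceil=\Theta(\epsilon d/\log n)$, matching the claimed bound. This already yields a ``pointwise'' almost-uniformity: each vertex has at least $n(1-2\epsilon)$ vertices in two consecutive layers. The step I expect to be the real obstacle is the \emph{alignment} of the critical distance, i.e.\ showing that the layer index $a_u$ is essentially the same for all $u$. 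Here I would exploit the metric bound $|r_u-r_v|\le d_G(u,v)+O(L)$ --- obtained by summing $|d_G(u,x)-d_G(v,x)|\le d_G(u,v)$ over the $\ge n(1-4\epsilon)$ vertices $x$ lying in the windows of both $u$ and $v$ --- applied to pairs $u,v$ that are \emph{adjacent in} $G'$ (so $d_G(u,v)\le t$), to conclude that $a_u$ varies by only $O(1)$ along each edge of $G'$; a further power by a factor $\Theta(\log n)$ then absorbs this residual variation together with the $\pm 1$ layer spread into a single critical layer. This accounts both for the passage from almost-uniform to uniform and for the extra $\log n$ factor, i.e.\ the diameter $\Theta(\epsilon d/\log^2 n)$.

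Finally I would check the hypotheses. The assumptions $n\ge 24$ and $d>2\log n$ are precisely what makes $L=\mathrm{polylog}(n)/\epsilon$ provably shorter than the diameter, and they are also what gets the doubling recursion inside the ball-growth lemma started; without a lower bound on $d$ relative to $\log n$ the statement would be vacuous, since a window of polylogarithmic width could already contain the whole graph. The crux of the argument is the alignment step: the ball-growth lemma only controls the distance profile of \emph{each vertex in isolation}, and ruling out that distinct vertices concentrate at very different distance layers forces one to invoke the median/equilibrium property globally rather than relying on pure metric geometry --- which is also why distance-uniformity genuinely costs more (an extra $\log n$) than distance-almost-uniformity.
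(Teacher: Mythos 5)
First, a point of reference: the paper does not prove Theorem~\ref{thm:alon} --- it is quoted from \cite{AlonDHKL14} --- so there is no in-paper proof to match your sketch against; the closest in-paper analogue is Proposition~\ref{prop:metric}, proved for the classic (buying) game. Your skeleton (swap-equilibrium condition $\Rightarrow$ median property of $w$ for the set $T$ cut off by $vw$ $\Rightarrow$ ball-growth/tail-decay $\Rightarrow$ a width-$L$ window $[r_u,r_u+L]$ per vertex $\Rightarrow$ compression by powering) is indeed the engine of Alon et al., and the per-vertex statement you obtain this way is sound.

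The genuine gap is in the alignment step, which you rightly single out as the crux but then resolve with an argument that does not close. Distance-(almost-)uniformity as defined here requires \emph{one} critical index valid for every vertex simultaneously. Your mechanism yields $|r_u-r_v|\le d_G(u,v)+O(L)$, hence $|a_u-a_v|=O(1)$ across each edge of $G'$; but this only bounds the total variation of $a_u$ by $O(\mathrm{diam}(G'))$, and a further power by $\Theta(\log n)$ divides the diameter and the accumulated drift by the \emph{same} factor, so the drift remains a constant fraction of the new diameter and never collapses into a single layer. Indeed, pointwise concentration plus the metric bound cannot suffice: attach $n-m$ pendant leaves to the midpoint of a path of length $m$; every vertex then has all but $m$ vertices in a single distance layer, yet $r_u$ drifts linearly along the path and the drift survives every powering. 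What is actually needed is a \emph{pairwise} bound --- for every $u,v$, all but $\epsilon n$ vertices $z$ satisfy $|d_G(z,u)-d_G(z,v)|=O(L)$ \emph{independently of} $d_G(u,v)$ --- and this can only come from the equilibrium condition applied to the pair, not from metric geometry. This is precisely the role of the sets $M_i(v_1,v_2)$ and the deviation in which $v_1$ and $v_2$ each buy a link to the other in the proof of Proposition~\ref{prop:metric}; in the swap-only setting of \cite{AlonDHKL14} the analogous pairwise control is extracted from swap deviations. Without this ingredient your argument proves only the per-vertex version of the statement, not the theorem as stated; your closing remark that one must ``invoke the equilibrium property globally'' is correct, but it is not what the middle of your proof actually does.
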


Where the word \emph{induce} in the previous statement means that there exists an exponent $x$ such that $G'=G^x$. Now, let a \emph{buying deviation} be a deviation that consists only in buying links so that a \emph{buying \NE} is a \NE when restricting only to buying deviations.  Of course, every \NE in the general sense is, in particular, a buying \NE. Considering this we see that an analogous property to Theorem \ref{thm:alon} holds for equilibria in the sum classic network creation game.
In the following, we deduce an analogous property for the buying equilibria in the sum classic network creation game. 

More precisely, we see that for any  $\epsilon >0$ and any buying \NE $G$, there exists  a distance index $r_{\epsilon}$ and a quantity $x_{\epsilon} = (4\alpha/n)\epsilon^{-1} +1$ such that any node $u$ has at least $n(1-\epsilon)$ of the nodes from $G$ at a distance exactly in the interval $[ r_{\epsilon}-x_{\epsilon},r_{\epsilon}+x_{\epsilon}]$. In other words, the $2x_{\epsilon}$ power of equilibria satisfy the notion of distance-uniformity introduced by Alon et al. for the sum basic network creation game.    Therefore, it is important to notice that for the range $\alpha < 4n$, which is our range of interest since for $\alpha > 4n-13$ every \NE is a tree,  the quantity $2x_{\epsilon}$ is $O(\epsilon^{-1})$. Therefore, this is a significant improvement over the exponents $O(\epsilon^{-1}\log n)$ and $O(\epsilon^{-1}\log^2 n)$ which are considered in the proof of Theorem \ref{thm:alon} 

In order to achieve this goal we introduce some extra notation that allows us to simplify some calculations. Given $u, v \in V(G) $ and $i$ an integer value, we define $R^i_u(v)=\left\{z \mid -i \leq d_G(z,v) - d_G(u,v) \leq i  \right\}$.

\begin{figure}
\begin{center}
\includegraphics[scale=0.6]{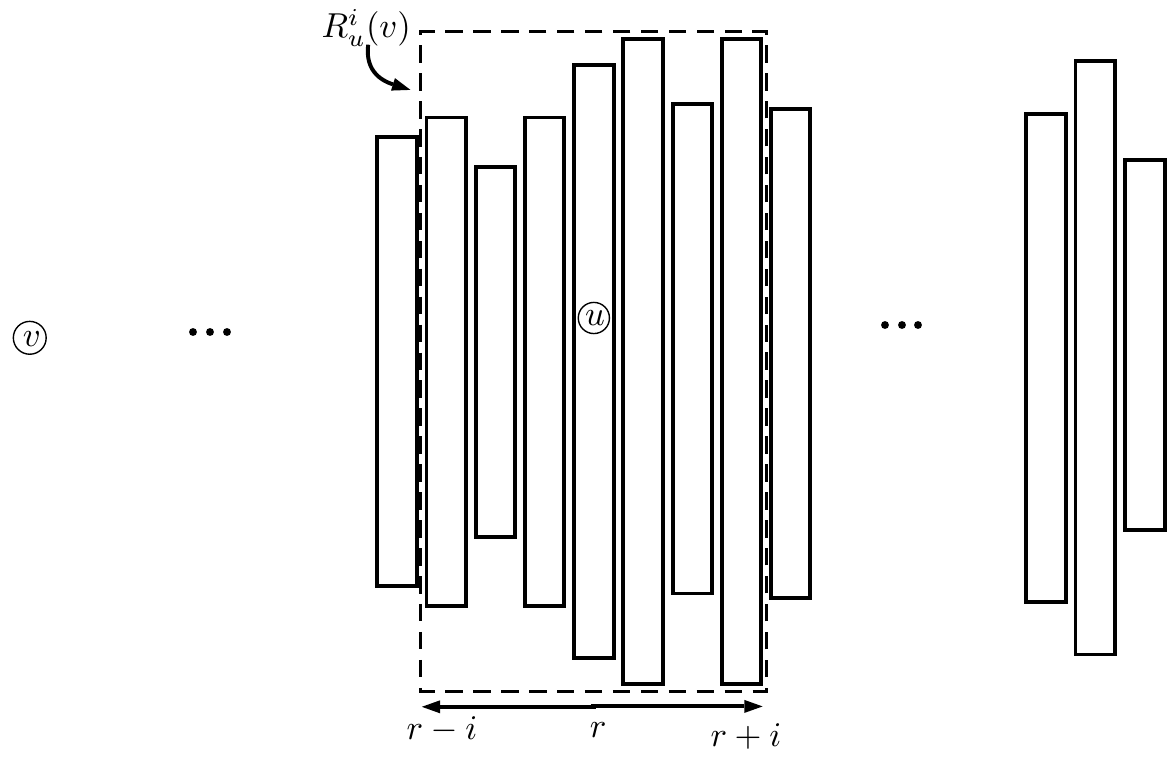}
\end{center}
\caption{The set $R_u^i(v)$ is the union of the $i$ distance layers to the left and to the right of the distance layer where $u$ belongs.}
\end{figure}

$R_u^i(v)$ can be seen as the distance layer with respect to $v$ which contains $u$ together with the corresponding $i$ distance layers to the left and to the right from this layer, although some of them could be empty if $d_G(u,v)+i>d$ or $d_G(u,v)-i<0$.

Furthermore, if $r$ is a distance index, then we define $R^i_r(u)=\cup_{-i \leq j \leq i} A_{r+j}(u)$. In other words, $R_r^i(v)$ is the $r$-th distance layer with respect to $v$ together with the corresponding $i$ distance layers to the left and to the right from this layer, although some of them could be empty if $r+i > d$ or $r-i < 0$.

Finally, given $v_1, v_2 \in V(G) $ and an integer $i$, we define $M_i(v_1,v_2)=\left\{z \mid -i \leq d_G(z,v_1) - d_G(z,v_2) \leq i  \right\}$. In other words, $M_i(v_1,v_2)$ is the set of nodes such that the difference of the distances to $v_1$ and $v_2$ in absolute value is no greater than $i$. For instance, $M_0(v_1,v_2)$ is the set of nodes equidistant from $v_1,v_2$ and $M_d(v_1,v_2) = V(G)$.  In the figure \ref{fig:med} it is assumed that $d_G(v_1,v_2)=k$ and the tag $[x_1,x_2]$ is used to describe the subset of nodes for which the distances to $v_1$ and $v_2$ are of the form $x_1$ and $x_2$, respectively. 

\begin{figure}
\label{fig:med}
\begin{center}
\includegraphics[scale=0.5]{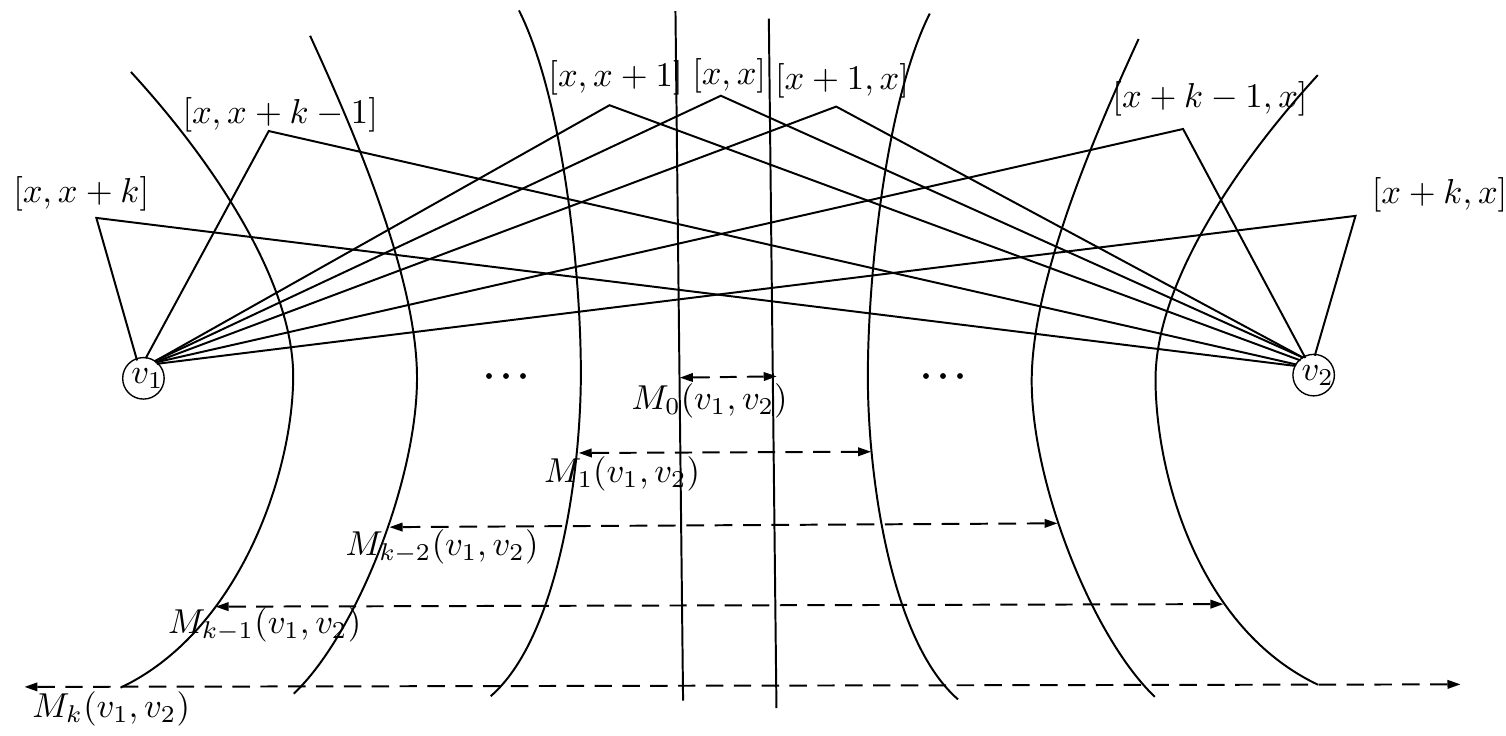}
\end{center}
\caption{The set $M_i(v_1,v_2)$ is the union of the nodes such that the difference of the distances to $v_1$ and $v_2$ is upper bounded by $i$.}
\end{figure}

\begin{proposition}
\label{prop:metric}
 Any buying NE graph $G$ satisfies that there exists a distance index $r$ such that $\sum_{i \geq 2}^d|R^i_r(u)| \geq nd-n-4\alpha$ for every node $u \in V(G)$, where $d=diam(G)$.
 
 \end{proposition}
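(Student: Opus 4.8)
The plan is to reduce everything to a single clean consequence of the buying-\NE condition --- that any two players have almost the same vector of distances to the rest of the graph --- and then transfer this pairwise statement to one common critical distance $r$. The first step is to prove that for every pair of distinct nodes $u,v\in V(G)$ one has $\sum_{w\in V(G)}\max(0,\,|d_G(u,w)-d_G(v,w)|-1)\le 2\alpha$. If $uv\in E(G)$ this is immediate, since then $|d_G(u,w)-d_G(v,w)|\le 1$ for all $w$. If $uv\notin E(G)$, I would apply the buying-\NE property to the two deviations ``$u$ adds the link $uv$'' and ``$v$ adds the link $uv$'': adding $uv$ replaces $d_G(u,w)$ by $\min(d_G(u,w),1+d_G(v,w))$, so the decrease of $u$'s distance cost equals $\sum_w\max(0,d_G(u,w)-1-d_G(v,w))\le\alpha$, and symmetrically $\sum_w\max(0,d_G(v,w)-1-d_G(u,w))\le\alpha$; adding the two inequalities and using $\max(0,x-1)+\max(0,-x-1)=\max(0,|x|-1)$ gives the bound. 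In the notation of the section this reads $\sum_{i\ge 1}|V(G)\setminus M_i(u,v)|\le 2\alpha$.

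Next I would record the elementary combinatorial identity $\sum_{i=2}^{d}|R^i_r(u)|=n(d-1)-\sum_{w}\max(0,|d_G(u,w)-r|-2)$: the layer $A_k(u)$ is counted exactly $d+1-\max(2,|k-r|)=(d-1)-\max(0,|k-r|-2)$ times on the left, and $\sum_k|A_k(u)|=n$. Hence the Proposition is equivalent to: there is an index $r$ with $\sum_w\max(0,|d_G(u,w)-r|-2)\le 4\alpha$ for every $u$. Now fix any node $v_0$; from $|d_G(u,w)-r|\le|d_G(u,w)-d_G(v_0,w)|+|d_G(v_0,w)-r|$ and $\max(0,a+b)\le\max(0,a)+\max(0,b)$ one gets, summing over $w$ and using Step 1 on the pair $(u,v_0)$,
\[
\sum_w\max(0,|d_G(u,w)-r|-2)\ \le\ 2\alpha\ +\ \sum_w\max(0,|d_G(v_0,w)-r|-1)\qquad(\forall u).
\]
So it suffices to produce a node $v_0$ and an index $r$ with $\sum_w\max(0,|d_G(v_0,w)-r|-1)=\sum_{j\ge r+2}|A_{\ge j}(v_0)|+\sum_{0\le j\le r-2}|A_{\le j}(v_0)|\le 2\alpha$, i.e.\ a vertex whose own distance profile is tightly concentrated around a single layer.

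For this last step I would take $v_0$ to be a vertex minimizing $D_G(\cdot)$ and $r$ a central layer for it (its median distance, or $\lceil D_G(v_0)/n\rceil$), and control its two distance-tails by ``cut'' buying deviations. For the upper tail: if $v_0$ buys the link $v_0x$ with $x\in A_j(v_0)$, it saves at least $j-1$ on every node whose shortest path from $v_0$ meets $x$, so $(j-1)N_x\le\alpha$, and since the vertices $x\in A_j(v_0)$ together lie on shortest paths to all of $A_{\ge j}(v_0)$ we obtain $|A_{\ge j}(v_0)|\le\frac{\alpha}{j-1}|A_j(v_0)|$, which forces the tail $|A_{\ge j}(v_0)|$ to decay quickly in $j$. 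For the lower tail one runs the same deviation from a deepest node $z$, for which the nodes close to $v_0$ all become far, and uses the minimality of $D_G(v_0)$ to guarantee that $r$ sits in the bulk of the profile rather than near an extreme. Assembling the three steps yields $\sum_w\max(0,|d_G(u,w)-r|-2)\le 4\alpha$ for every $u$, which is the claim.

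The main obstacle is clearly this last step. Step 1 only says that all the distance vectors are pairwise close; it does \emph{not} say that any one of them is close to a constant function, and a single vertex being close to a constant is exactly the content of distance-uniformity. So one genuinely has to extract concentration of one well-chosen vertex's distance profile from the full buying-\NE condition, and squeezing the ``cut'' deviations down to the exact bound $2\alpha$ --- rather than the weaker $O(\alpha\log\alpha)$-type bound that the crude tail decay gives --- while making the choice of $r$ robust to how concentrated the profile actually is, is where the real work lies; it is closely tied to the degree/structure estimates developed later in the paper.
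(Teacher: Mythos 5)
Your Steps 1--3 are sound and in fact coincide with the paper's own argument: Step 1 (the pairwise bound $\sum_w\max(0,|d_G(u,w)-d_G(v,w)|-1)\le 2\alpha$ obtained from the two buying deviations across a non-edge) is exactly the paper's inequality $\sum_{i\ge 1}(n-|M_i(v_1,v_2)|)\le 2\alpha$, and your Steps 2--3 reproduce, via an explicit triangle inequality on truncated absolute values, the paper's transfer step based on the inclusion $R_r^i(v)\cap M_j(v,w)\subseteq R_r^{i+j}(w)$. So the reduction is correct: everything hinges on exhibiting one node $v_0$ and one index $r$ with $\sum_w\max(0,|d_G(v_0,w)-r|-1)\le 2\alpha$.

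Your Step 4, however, which is the heart of the proposition, has a genuine gap --- and you essentially concede it. The cut-deviation bounds $(j-1)N_x\le\alpha$ only give $|A_{\ge j}(v_0)|\le\frac{\alpha}{j-1}|A_j(v_0)|$, and summing these over $j$ yields at best an $\alpha\log\alpha$-type tail, not $2\alpha$; moreover nothing guarantees that the $D_G$-minimizer with a median layer $r$ works, since concentration of a single vertex's profile is precisely what must be proved and cannot be extracted from one or two deviations. The idea you are missing is an averaging (double-counting) argument over \emph{all} paired deviations simultaneously: fix any node $u$ and sum, over all $v\in V(G)$, the cost differences of the two deviations ``$u$ buys $uv$'' and ``$v$ buys $vu$''. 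The total link cost is $2\alpha(n-1)$, while the total distance saving equals
\[
\sum_{z\in V(G)}\ \sum_{v\in V(G)}\max\bigl(0,\,|d_G(u,z)-d_G(v,z)|-1\bigr),
\]
so the equilibrium condition forces this double sum to be at most $2\alpha(n-1)$. By pigeonhole over $z$ there is a node $v_0:=z$ with $\sum_{v}\max(0,|d_G(v,v_0)-d_G(u,v_0)|-1)\le 2\alpha$, i.e.\ the distance profile of $v_0$ is concentrated around $r:=d_G(u,v_0)$ with total deficit at most $2\alpha$ --- exactly the input your Steps 2--3 require. This single averaging step, which both selects $v_0$ and determines $r$, replaces your entire Step 4 and is what the paper does.
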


\begin{proof} Consider a fixed node $u\in V(G)$. Let $S_{v}$ be the sum of the cost differences associated to the following two deviations: first, the deviation with respect to $u$ which consists in buying a link to $v$ and, in second place, the deviation with respect to $v$ which consists in buying a link to $u$.  Furthermore, let $\Delta_v(z)$ be the sum of the distance changes from $u$ to $z$ when $u$ buys a link to $v$ and from $v$ to $z$ when $v$ buys a link to $u$. Defining $S=\sum_{v\in V(Z)}S_v$, then we have have the relationship $S = 2\alpha(n-1)-\sum_{z\in V(G)} \sum_{v\in V(G)}\Delta_v(z)$. 

For any node $z \in V(G)$ we study now the sum $\sum_{v\in V(G)}\Delta_v(z)$. Clearly, $\sum_{v \in R_u^1(z)}\Delta_v(z) = 0$ and, in general: 

$$\sum_{v\in R_{u}^i(z) \setminus R_u^{i-1}(z)}\Delta_v(z)=\sum_{v\in A_{d(u,z)-i}(z) \cup A_{d(u,z)+i}(z)}\Delta_v(z)=i-1$$

For any $i \geq 2$.  This implies:

$$\sum_{v \in V(G)}\Delta_v(z) = \sum_{i \geq 2}^d(i-1)|R_{u}^i(z) \setminus R_u^{i-1}(z)|= \sum_{i \geq 1}^d(n-|R_u^i(z)|)$$

\noindent Using that $G$ is a \NE graph by hypothesis, this implies that $$0 \leq S = 2\alpha (n-1)-\sum_{z \in V(G)}\sum_{i \geq 1}^d(n-|R_u^i(z)|)$$

 From here  we deduce that there exists a node $v \in V(G)$ such that $\sum_{i \geq 1}^d |R_u^i(v)| \geq nd - 2\alpha$. Let $r = d_G(u,v)$ so that $|R_r^i(v)|=|R_u^i(v)|$.

On the other hand, let $S'$ be the sum of the cost differences associated to the deviations in $v_1,v_2$, respectively, that consist of buying a link from $v_1$ to $v_2$ and from $v_2$ to $v_1$, respectively. For any node $z \in V(G)$ let us study the sum of the distance changes with respect to these deviations. Clearly, the sum of the distance changes appearing in $S'$ that correspond to $z\in M_1(v_1,v_2)$ is $0$. In general, the sum of the distance changes appearing in $S'$ that correspond to $z\in M_i(v_1,v_2) \setminus M_{i-1}(v_1,v_2)$ is exactly $i-1$ for any $i \geq 2$. This implies that the sum of the distance changes appearing in $S$ is exactly:

$$\sum_{i \geq 2}^d(i-1)|M_i(v_1,v_2) \setminus M_{i-1}(v_1,v_2)|= \sum_{i \geq 1}^d(n-|M_i(v_1,v_2)|)$$

\noindent Using that $G$ is a \NE graph by hypothesis, this implies that:

 $$0 \leq S' = 2\alpha -\sum_{i \geq 1}^d(n-|M_i(v_1,v_2)|)$$ 

\noindent Then, $\sum_{j \geq 1}^d |M_j(v_1,v_2)| \geq nd - 2\alpha$.

\noindent Let $w \neq v$ be an arbitrary node from $V(G)$. If $z$ is such that $r-i \leq d_G(z,v) \leq r+i$ and $-j \leq d_G(z,v)-d_G(z,w) \leq j$ then $r-i-j \leq d_G(z,w) \leq r+i+j$. Therefore the following inclusion holds $R_r^i(v) \cap M_j(v,w) \subseteq R_r^{i+j}(w)$.

\vskip 5pt
\noindent Now, for each $z\in V(G)$, let $X(z)$ be the number of times that $z$ appears in the sequence of nested sets $R_r^{1}(v) \subseteq R_r^2(v) \subseteq ... \subseteq R_r^{d}(v)$. Similarly, let $Y(z)$ be the number of times that $z$ appears in the sequence of nested sets $M_1(w,v) \subseteq M_2(w,v) \subseteq ... \subseteq M_d(w,v)$. Finally, let $X'(z)$ be the number of times that $z$ appears in the sequence of nested sets $R_r^2(w) \subseteq R_r^3(w) \subseteq ... \subseteq R_r^d(w)$. By the previous inclusion relationship we have that $X'(z) \geq X(z)+Y(z)-d-1$. Therefore, adding all these inequalities for every $z \in V(G)$ we get: 

$$ \sum_{i \geq 2}^d |R_r^i(w)| = \sum_{z \in V(G)}X'(z) \geq \sum_{z \in V(G)}(X(z)+Y(z))-dn-n =$$

$$ =\sum_{i \geq 2}^d |R_r^i(v)| + \sum_{i \geq 1}^d |M_i(v,w)| -dn -n\geq dn-n-4\alpha$$

\end{proof}

As an immediate consequence we reach the main result of this section.

\begin{theorem}  \label{thm:formuladist}
For any  $\epsilon >0$ small enough there exists  a distance index $r_{\epsilon}$ and a quantity $x_{\epsilon} = (4\alpha/n)\epsilon^{-1} +1$ such that any node $u$ has at least $n(1-\epsilon)$ of the nodes from the network at a distance exactly in the interval $[ r_{\epsilon}-x_{\epsilon},r_{\epsilon}+x_{\epsilon}]$.
\end{theorem}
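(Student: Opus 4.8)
The plan is to read Theorem~\ref{thm:formuladist} off Proposition~\ref{prop:metric} by a standard ``averaging to pointwise'' argument, with no new ideas required. Let $r$ be the distance index produced by Proposition~\ref{prop:metric}, which is the same for every node, and set $r_\epsilon := r$. Fix an arbitrary node $u \in V(G)$. The proposition gives $\sum_{i=2}^{d}|R_r^i(u)| \ge nd - n - 4\alpha$; since this sum has $d-1$ terms and each satisfies $|R_r^i(u)| \le n$, subtracting from $(d-1)n$ yields
$$\sum_{i=2}^{d}\bigl(n - |R_r^i(u)|\bigr) \;\le\; (d-1)n - (nd - n - 4\alpha) \;=\; 4\alpha .$$
So the $d-1$ nonnegative quantities $n - |R_r^i(u)|$, for $i = 2,\dots,d$, have total mass at most $4\alpha$.

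Next I would exploit the one structural fact available: the sets $R_r^1(u) \subseteq R_r^2(u) \subseteq \dots \subseteq R_r^d(u)$ are nested, so $i \mapsto n - |R_r^i(u)|$ is nonincreasing. Hence if $n - |R_r^k(u)| > n\epsilon$ for some index $k \le d$, then $n - |R_r^i(u)| > n\epsilon$ for every $i \in \{2,\dots,k\}$ as well, and summing these $k-1$ terms forces $(k-1)\,n\epsilon < 4\alpha$, i.e. $k < (4\alpha/n)\epsilon^{-1} + 1 = x_\epsilon$. Taking the contrapositive at $k = x_\epsilon$, we must have $n - |R_r^{x_\epsilon}(u)| \le n\epsilon$, that is $|R_r^{x_\epsilon}(u)| \ge n(1-\epsilon)$ (and if $x_\epsilon > d$ the bound is trivial, since then $R_r^{x_\epsilon}(u) = V(G)$). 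By the very definition of $R_r^{x_\epsilon}(u)$, this set is exactly the collection of nodes at distance in $[\,r_\epsilon - x_\epsilon,\; r_\epsilon + x_\epsilon\,]$ from $u$, which is the claimed statement, and $u$ was arbitrary.

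I do not expect a genuine obstacle here; this is the ``immediate consequence'' announced right after Proposition~\ref{prop:metric}, and essentially all the work has already gone into that proposition. The only points needing a little care are bookkeeping ones: $x_\epsilon$ should be interpreted as an integer distance index (rounding up only enlarges $R_r^{x_\epsilon}(u)$, so the size bound is unaffected), and the index range $\{2,\dots,x_\epsilon\}$ used in the summation step must be nonempty, which requires $x_\epsilon \ge 2$, i.e. $\epsilon \le 4\alpha/n$ — precisely what ``$\epsilon$ small enough'' means in the statement. Finally I would close by noting that in the regime of interest $\alpha < 4n$ (for larger $\alpha$ every \NE is already a tree) the resulting width $2x_\epsilon$ is $O(\epsilon^{-1})$, the announced improvement over the $O(\epsilon^{-1}\log n)$ and $O(\epsilon^{-1}\log^2 n)$ exponents appearing in Theorem~\ref{thm:alon}.
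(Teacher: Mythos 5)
Your proof is correct and is essentially the paper's own argument: the paper bounds $\sum_{i\geq 2}^{d}|R_r^i(u)| \leq (x-1)|R_r^{x}(u)|+(d-x)n$ using the nesting of the $R_r^i(u)$ and compares with the lower bound from Proposition~\ref{prop:metric}, which is algebraically the same step as your complement/monotonicity contrapositive. The integrality bookkeeping for $x_\epsilon$ that you flag is glossed over identically in the paper, so there is nothing further to fix.
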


\begin{proof}

\noindent Using Proposition \ref{prop:metric}, we get that $\sum_{i \geq 2}|R_r^i(v)| \geq nd-n-4\alpha$, where $d=diam(G)$. Then for $d\geq x$ we obtain  $dn-4\alpha-n \leq  \sum_{i \geq 2}^d|R_r^i(v)| \leq (x-1) |R_r^{x}(v)|+(d-x)n$. And this implies that $|R_r^{x}(v)| \geq n\left(1-\frac{4\alpha}{n(x-1)} \right) $, which is equivalent to what we wanted to prove. 

\end{proof}

\begin{corollary}
For any $\epsilon > 0$ small enough there exists a quantity $x_{\epsilon} = (4\alpha/n)\epsilon^{-1} +1$ such that  the $2x_{\epsilon}$ power of any buying \NE graph is $\epsilon-$distance-almost-uniform. 
\end{corollary}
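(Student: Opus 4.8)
The plan is to derive the corollary directly from Theorem~\ref{thm:formuladist} by passing to the power graph and converting $G$-distances into $G^{k}$-distances for the right exponent $k$. The only structural fact needed is that in the $k$-th power one has $d_{G^{k}}(u,v)=\lceil d_{G}(u,v)/k\rceil$, so that the distance layer $A_{m}^{G^{k}}(u)$ is precisely the set of vertices whose $G$-distance from $u$ lies in the block of $k$ consecutive integers $[(m-1)k+1,\,mk]$. Thus ``collapsing'' $G$ into $G^{k}$ amounts to grouping the distance layers of $G$ into consecutive windows of length $k$, and the whole task is to check that the window carrying the bulk of the nodes fits, for every node simultaneously, into a small number of such blocks.

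First I would fix $\epsilon>0$ small enough and apply Theorem~\ref{thm:formuladist}: it yields a single critical distance $r_{\epsilon}$ and the quantity $x_{\epsilon}=(4\alpha/n)\epsilon^{-1}+1$ such that \emph{every} node $u$ has at least $n(1-\epsilon)$ vertices of $G$ at $G$-distance in the window $I_{\epsilon}:=[r_{\epsilon}-x_{\epsilon},\,r_{\epsilon}+x_{\epsilon}]$. Next I would set $k:=\lceil 2x_{\epsilon}\rceil$ (the rounding is harmless, since the bound of Theorem~\ref{thm:formuladist} is monotone in the window length, and $2x_{\epsilon}$ is the exponent named in the statement). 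Since $I_{\epsilon}$ has real length $2x_{\epsilon}\le k$, it contains at most $k+1$ integers and therefore meets at most two consecutive blocks of the partition $\{[(m-1)k+1,mk]\}_{m\ge 1}$; call their indices $m_{\epsilon}$ and $m_{\epsilon}+1$. The decisive observation is that $m_{\epsilon}$ depends only on $r_{\epsilon}$ and $x_{\epsilon}$, hence is the \emph{same for all} nodes $u$ — this is exactly where the node-independence of the critical distance supplied by Theorem~\ref{thm:formuladist} enters. Consequently, in $G^{k}$ every node $u$ has its $\ge n(1-\epsilon)$ guaranteed vertices inside $A_{m_{\epsilon}}^{G^{k}}(u)\cup A_{m_{\epsilon}+1}^{G^{k}}(u)$, i.e.\ concentrated on two consecutive layers with a common index; taking $m_{\epsilon}$ as the critical distance shows that the $2x_{\epsilon}$-th power of $G$ is $\epsilon$-distance-almost-uniform, as claimed.

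I do not expect a genuine obstacle here: all the substance is already contained in Proposition~\ref{prop:metric} and Theorem~\ref{thm:formuladist}, and what remains is only the translation into the vocabulary of distance-(almost-)uniformity. The single point that must be carried across carefully is the one highlighted above — that the window $I_{\epsilon}$, and hence the pair of collapsed layers, is dictated by the globally valid critical distance $r_{\epsilon}$ and does not drift from node to node; with only a per-node window the argument would fail, and this is precisely the ``almost-uniform'' (rather than merely ``two layers somewhere'') strength of the conclusion. Two minor caveats deserve a sentence in the write-up: the statement is only meaningful when $\operatorname{diam}(G)$ is large relative to $x_{\epsilon}$ (otherwise $G^{k}$ is essentially complete and the conclusion is vacuous), matching the regime already implicit in Theorem~\ref{thm:formuladist}; and the integrality rounding in the passage from $x_{\epsilon}$ to $k$ costs only an additive constant and changes nothing essential.
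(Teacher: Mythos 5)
Your argument is correct and is exactly the intended "immediate consequence" that the paper leaves implicit: pass to the $\lceil 2x_{\epsilon}\rceil$-th power, use $d_{G^{k}}(u,v)=\lceil d_G(u,v)/k\rceil$ to collapse the window $[r_{\epsilon}-x_{\epsilon},r_{\epsilon}+x_{\epsilon}]$ (which has $k+1$ integers) into at most two consecutive layers, and use the node-independence of $r_{\epsilon}$ from Theorem~\ref{thm:formuladist} to get a common critical index. The only caveat worth a word is that your conclusion bounds the \emph{union} $|A_{m_{\epsilon}}(u)\cup A_{m_{\epsilon}+1}(u)|$ rather than the literal $\max_{i\in\{r,r+1\}}|A_i(u)|$ written in the paper's definition, but the union is clearly the intended reading of $\epsilon$-distance-almost-uniformity.
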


Finally, notice that if we consider the range $\alpha < n/C$ with $C > 4$, applying exactly the same reasoning of this corollary, we reach the conclusion of Proposition 10 from \cite{Alvarezetal} which is that every $4$th power of any \NE graph $G$ is a $4\alpha/n$-distance-almost-uniform graph.  


\section{The number of non-bridge links owned by every player}
\label{sec:ramsey}

Recall that a bridge from a graph is any edge such that when it is removed the number of connected components of the same graph increases.  Furthermore, if $diam(G) < D$ for every equilibrium graph $G$ then $\PoA \leq \max( D+1,5)$. From here, with these ideas in mind, in order to make further steps into proving the constant \PoA conjecture, it is enough if we pay attention to networks having diameter larger than a specific constant $D$ chosen conveniently. 

In this section we provide precisely a non-trivial upper bound for the number of non-bridge links bought by any player from equilibria having diameter larger than a constant $D$ to be determined in subsection 3.3. 

Our reasoning is divided into several steps:
 
 (1) We first show that there exist constants $D_1,R$ such that for any biconnected component $H$ of a non-tree equilibrium graph $G$ of diameter greater than $D_1$ it holds that $deg_H^+(v) \leq \max(R, \frac{2n}{\alpha}+6)$ (Theorem \ref{thm:degbound2-k} in subsection 3.2). To reach this result, we first prove an intermediate result which is that $deg_H^+(v) = O((n/\alpha)^2)+O(1)$ (Theorem \ref{thm:4} in subsection 3.1).
 
 \vskip 5pt
 
 (2) We show that any \NE graph $G$ having diameter larger than a constant $D_2$ has at most one biconnected component (Theorem \ref{thm:biconnected} in subsection 3.3).  Therefore, any \NE graph having diameter larger than $D_2$ consists of a unique biconnected component $H$ and then for every $v\in V(H)$, $S(v)$ is a tree.
 
 \vskip 5pt
 
 (3) Combining (1) and (2) and using the basic idea that in a graph that admits at most one biconnected component $H$ the number of non-bridge links owned by any player $v$ is either $0$ if $H$ does not exist or $deg_H^+(v)$ if $H$ exists, then we reach the next statement:

\begin{theorem}
\label{thm:deg}
There exist constants $R,D=\max(D_1,D_2)$ such that every player of any \NE graph $G$ having diameter larger than $D$ owns at most $\max(R, 2n/\alpha+6)$ non-bridge links.  
\end{theorem}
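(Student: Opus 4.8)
The plan is to derive Theorem~\ref{thm:deg} by assembling the two structural facts announced above: the uniqueness of the non-trivial biconnected component for large-diameter equilibria (Theorem~\ref{thm:biconnected}) and the out-degree bound inside a biconnected component (Theorem~\ref{thm:degbound2-k}). Put $D=\max(D_1,D_2)$ and let $G$ be a \NE graph with $diam(G)>D$; recall $G$ is connected. The crucial elementary observation is that, once $G$ has at most one non-trivial biconnected component $H$, the set of non-bridge edges of $G$ is exactly $E(H)$ (and empty if no such $H$ exists): any non-bridge edge lies on a cycle, hence is contained in some non-trivial biconnected component, which must be $H$; conversely every edge of a biconnected subgraph on at least three vertices lies on a cycle and so is not a bridge. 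I will also use that $H$ is a \emph{maximal} biconnected subgraph: if $u,v\in V(H)$ and $uv\in E(G)$ then already $uv\in E(H)$, since adding an edge to a biconnected graph keeps it biconnected, so by maximality $H$ cannot be a proper subgraph of $H+uv$.

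Granting the reduction, the proof is a short case analysis. If $G$ is a tree there is no non-trivial biconnected component, every edge is a bridge, and every player owns $0\le\max(R,2n/\alpha+6)$ non-bridge links. Otherwise $G$ is a non-tree and, since $diam(G)>D_2$, Theorem~\ref{thm:biconnected} guarantees a unique non-trivial biconnected component $H$. Fix a player $v$. If $v\notin V(H)$, no link $uv$ bought by $v$ can be a non-bridge edge: a non-bridge edge incident to $v$ would put $v$ on a cycle and hence in a non-trivial biconnected component, necessarily $H$, contradicting $v\notin V(H)$; so $v$ owns no non-bridge link. If $v\in V(H)$, the maximality remark shows that the non-bridge links bought by $v$ are precisely $\{\,uv : u\in s_v\cap V(H)\,\}$, of cardinality $deg_H^+(v)$, and since $diam(G)>D_1$ Theorem~\ref{thm:degbound2-k} gives $deg_H^+(v)\le\max(R,2n/\alpha+6)$. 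In every case the number of non-bridge links owned by $v$ is at most $\max(R,2n/\alpha+6)$, as claimed.

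Thus the statement itself is a clean bookkeeping argument; the substance, and the main obstacle, lies in the two ingredients developed in Sections~3.1--3.3. For Theorem~\ref{thm:degbound2-k} I expect one first needs the crude estimate $deg_H^+(v)=O((n/\alpha)^2)+O(1)$ of Theorem~\ref{thm:4} — presumably obtained by showing that a vertex buying very many links inside $H$ can delete a carefully chosen bundle of them and reroute through a short detour, exploiting that many of its bought neighbours are metrically close to one another — and then this rough bound must be bootstrapped into the sharp $2n/\alpha+6$ by a more delicate removal/swap deviation, leveraging the distance-uniformity–type control established in Section~\ref{sec:distance-uniformity}. Theorem~\ref{thm:biconnected} should follow by assuming two non-trivial biconnected components joined by a chain of cut vertices and contradicting the large diameter, most plausibly by exhibiting a single well-placed link whose purchase by some vertex on that chain (or inside one of the components) would shortcut the whole configuration. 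These are the delicate points; the deduction of Theorem~\ref{thm:deg} from them is immediate.
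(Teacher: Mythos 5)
Your proof is correct and follows essentially the same route as the paper: it assembles Theorem~\ref{thm:biconnected} (uniqueness of the non-trivial biconnected component when $diam(G)>D_2$) with the out-degree bound of Subsection~\ref{subsec:precise}, via the observation that the non-bridge links owned by a player are exactly its links into the unique non-trivial biconnected component. The only pedantic point is that the bound $\max(R,2n/\alpha+6)$ without the hypothesis $n/\alpha>K$ is the content of the unnamed theorem following Theorem~\ref{thm:degbound2-k} (which falls back on Theorem~\ref{thm:4} when $n/\alpha\le K$), not of Theorem~\ref{thm:degbound2-k} itself.
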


 
 \vskip 10pt

Now notice the following observations. The $(k,l)-$clique of stars introduced by Albers et al. in \cite{Albersetal:06}, is a \NE graph when $\alpha = l$.  Therefore, this construction shows that our upper bound on the maximum directed degree in $H$ is the best possible asymptotically speaking because taking one node $u$ from the $k-$clique component buying all the links to the remaining nodes from the $k-$clique then $deg_H^+(u) = k-1 = n/\alpha-1$. 

Notice that if $u$ is any node minimising the sum of distances on $V(H)$ having local diameter $d$, then, for any $v \in A_{d}(u) \cap V(H)$ having $k=deg_H^+(v)$ edges from $H$, we have that $0 \leq -(k-1)\alpha + n+ D(u)-D(v) \leq -(k-1)\alpha+n$, by considering the deviation in $v$ that consists in selling the $k$ edges from $H$ and buying a link to $u$. Hence, $deg_H^+(v) = k \leq n/\alpha+1$. However, we would like to remark that the same reasoning can not be applied, at least directly, if we pick any node $v\in V(H)$ for which every node $u \in V(H)$ at a maximal distance from $v$ verifies that its local diameter is strictly greater than the local diameter of $v$.

\subsection{Generalised $A$ sets.}
\label{subsec:rough}

Let $u$ be a prefixed node and suppose that we are given $v \in V(H)$ and $e_1=(v,v_1),...,e_k=(v,v_k)$ links bought by $v$. The \emph{$A$ set of $v,e_1,...,e_k$ with respect to $u$}, denoted as $A^u_{e_1,...,e_k}(v)$, is the subset of nodes $z \in V(G)$ such that every shortest path (in $G$) starting from $z$ and reaching $u$ goes through $v$ and the predecessor of $v$ in any such path is one of the nodes $v_1,...,v_k$.  

Therefore, notice that $v \not \in A^u_{e_1,...,e_k}(v)$. Then, for any $i =1,...,k$, we define the \emph{$A^{i}$ set of $v,e_1,...,e_k$ with respect to $u$}, noted as $A^{i,u}_{e_1,...,e_k}(v)$, the subset of nodes $z$ from $A^u_{e_1,...,e_k}(v)$ for which there exists a shortest path (in $G$) starting from $z$ and reaching $u$ such that goes through $v$ and the predecessor of $v$ in such path is $v_i$. 

Notice that, $A^u_{e_1,...,e_k}(v) = A^{1,u}_{e_1,...,e_k}(v) \cup  ... \cup A^{k,u}_{e_1,...,e_k}(v)$ and $A^{i,u}_{e_1,...,e_k}(v)= \emptyset$ iff $d_G(u,v_i) = d_G(u,v)-1$ or $d_G(u,v_i)=d_G(u,v)$.  Furhtermore, the subgraph induced by $A^{i,u}_{e_1,...,e_k}(v)$ is connected whenever $A^{i,u}_{e_1,...,e_k}(v) \neq \emptyset$. 

Define $crossings(X,Y)$ for subsets of nodes $X,Y \subseteq V(G)$ to be the set of edges $xy$ with $x\in X$, $y \in Y$. 

 Then notice that the following simple property holds:

\begin{remark} \label{remark:surface}Let $xy$ be a crossing between $A_{e_1,...,e_l}^{j,u}(v)$ and its complement. Then $x,y\in V(H)$.  
\end{remark}

\begin{proof} For any $z \in V(G)$ let us use the notation $z'$ to refer to the unique node from $V(H)$ for which $z \in S(z')$ and let $X =A_{e_1,...,e_l}^{u,j}(v)$. 
First we claim that when $v\in V(H)$ (which is the case under consideration) for any two nodes $z_1,z_2 \in S(z')$ either $z_1,z_2 \in X$ or  $z_1,z_2 \not \in X$. This is because if $z\in X$ then implies $z' \in X$, too, by definition of the $A$ sets and by the definition of cut vertex and, conversely, if $z'\in V(H)$ and $z' \in X$ then $S(z') \subseteq X$ by the definition of the $A$ sets and by the definition of cut vertex, too. 

Now, if for the sake of the contradiction some crossing $xy \in crossings(X,X^c)$ satisfies that $x \not \in V(H)$, then this implies that $y \not \in S(x')$ because by the previous observation there cannot exist two distinct nodes $z_1,z_2 \in S(z')$ with $z_1 \in X$ and $z_2 \in X$. Then, there would exist two distinct paths from $x \in S(x')$ to $y \not \in S(x')$. The first one, would start from $x$ and would go to $x'$ then $y'$ and finally to $y$. The second one (distinct from the first one) would directly use the edge $xy$ to go from $x$ to $y$. This contradicts the definition of cut vertex and thus we have reached a contradiction.

\end{proof}

Then, given $l \leq k$ let $Z_{e_1,...,e_l}$ be the graph having nodes $z_1,...,z_l$ with an undirected edge between $z_i,z_j$ with $i \neq j$ iff $crossings(A^{i,u}_{e_1,...,e_k}(v), A^{j,u}_{e_1,...,e_k}(v)) \neq \emptyset$. Intuitively, $Z_{e_1,...,e_l}$ is the abstract graph of the crossing relations between the $A^i-$sets $A^{i,u}_{e_1,...,e_k}$ defined by $e_i \in \left\{ e_1,...,e_l\right\}$.

\begin{figure}
\begin{center}
\includegraphics[scale=0.5]{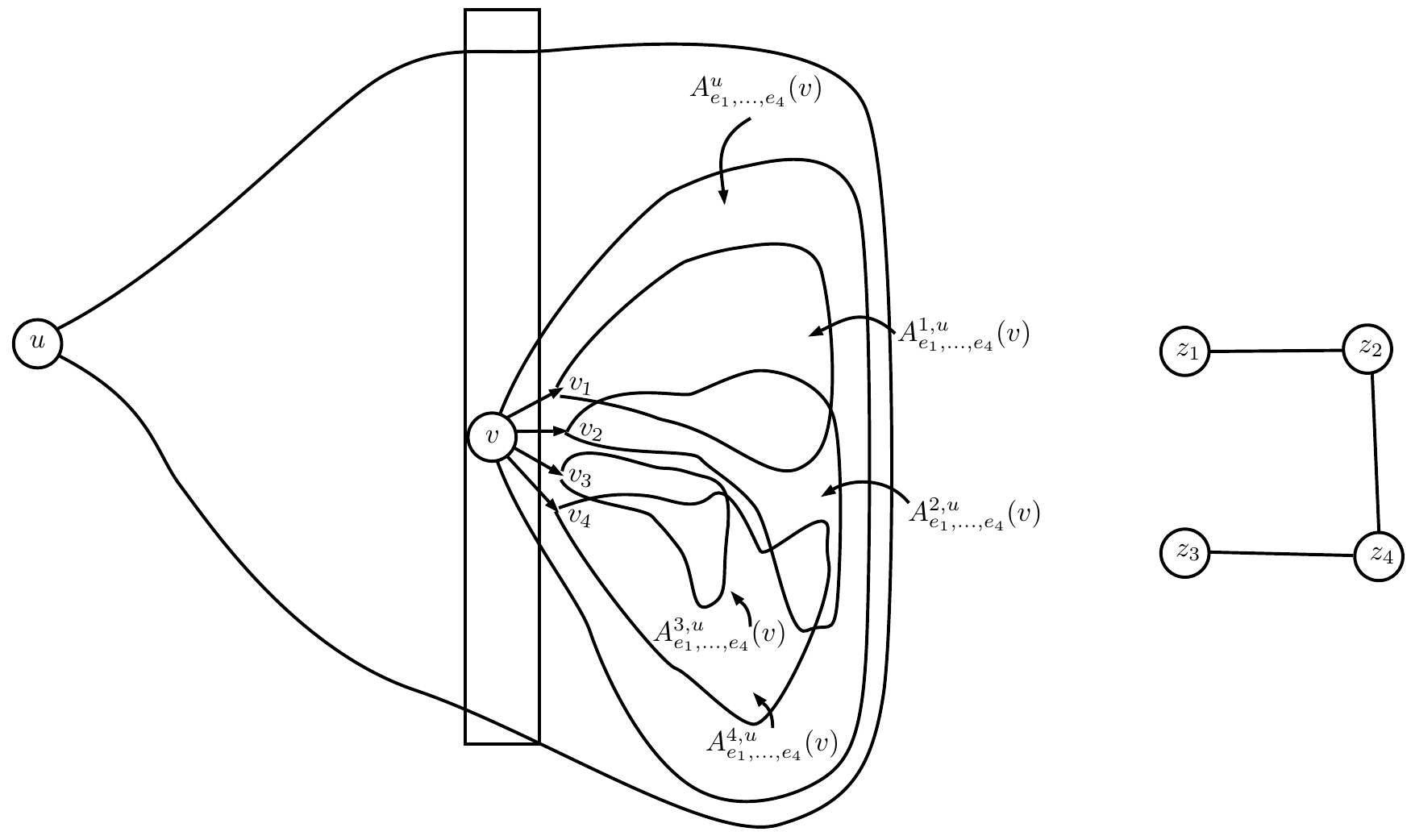}
\end{center}
\caption{On the left, the $A$ and $A^i$ set of $v$ with respect to $u$ and $e_1,...,e_4$. On the right, the graph $Z_{e_1,...,e_4}^4$.}
\end{figure}

Let $M_{e_{1},...,e_{l}}$ denote the maximum diameter of the connected components of the graph $Z_{e_{1},...,e_{l}}$. Then if we consider $\Delta C$ the cost difference associated to the deviation in $v$ that consists in deleting $e_1,...,e_l$ and buying a link to $u$, we can generalise and unify the formulae from Proposition 1 and Proposition 2 from \cite{Alvarezetal3} by stating and proving the following result: 

\begin{proposition}
\label{prop:formulageneral} 
Let $u, v \in V(H)$ and let $e_1,...,e_k \in E(H)$ be links bought by $v$. Let $\Delta C$ be the cost difference of the deviation on $v$ that consists in deleting $e_1,...,e_l$ with $l \leq k$ and buying a link to $u$. Then: 

$$\Delta C \leq -(l-1)\alpha+n+D(u)-D(v)+2d_H|A^u_{e_1,...,e_k}(v)|(1+M_{e_{1},..,e_{l}})$$
\end{proposition}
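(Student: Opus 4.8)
The plan is to split $\Delta C$ into its two natural parts. Let $G'$ be the graph obtained from $G$ by deleting $e_1,\dots,e_l$ and adding the edge $vu$. The change in the link cost of $v$ is at most $-(l-1)\alpha$: we sell $l$ edges and buy at most one. The change in the distance cost of $v$ is $D_{G'}(v)-D_G(v)=\sum_{z\ne v}\bigl(d_{G'}(v,z)-d_G(v,z)\bigr)$, and this is where the work lies. I would split $V(G)\setminus\{v\}$ according to membership in $A:=A^u_{e_1,\dots,e_k}(v)$ (recall $v\notin A$) and aim at two estimates: \emph{(E1)} $d_{G'}(v,z)\le 1+d_G(u,z)$ for $z\notin A$, and \emph{(E2)} $d_{G'}(v,z)\le d_G(v,z)+2d_H\bigl(1+M_{e_1,\dots,e_l}\bigr)$ for $z\in A$. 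Granting these, the proposition follows by bookkeeping: for $z\notin A$ the contribution to the distance change is at most $1+d_G(u,z)-d_G(v,z)$; for every $z\in A$ all shortest paths from $z$ to $u$ pass through $v$, so $d_G(u,z)-d_G(v,z)=d_G(u,v)\ge 1$, whence $\sum_{z\in V}\bigl(d_G(u,z)-d_G(v,z)\bigr)=D(u)-D(v)$ together with $\sum_{z\in A}\bigl(d_G(u,z)-d_G(v,z)\bigr)=|A|\,d_G(u,v)\ge 0$ give
\[ \sum_{z\notin A,\ z\ne v}\bigl(1+d_G(u,z)-d_G(v,z)\bigr)\le n+D(u)-D(v); \]
adding the $|A|$ many contributions bounded by (E2) and the link part $-(l-1)\alpha$ yields exactly the stated inequality.

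Estimate (E1) is the easy half. By definition of $A$, for $z\notin A$ there is a shortest path from $z$ to $u$ in $G$ that either avoids $v$ entirely or reaches $v$ through an edge $vw$ with $w\notin\{v_1,\dots,v_k\}$. Since $e_1,\dots,e_l$ are all incident to $v$, in the first case the whole path survives in $G'$, and in the second case the initial segment from $z$ up to and through the edge $wv$ survives; finishing off with the new edge $vu$ (in the second case replacing the remaining shortest $v$-to-$u$ portion, which had length at least $1$) shows $d_{G'}(v,z)\le 1+d_G(u,z)$.

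For (E2), write $z\in A^{i,u}_{e_1,\dots,e_k}(v)$. The initial segment of a shortest path from $z$ to $u$ that runs from $z$ to $v_i$ avoids $v$, hence survives in $G'$, so $d_{G'}(v,z)\le d_G(z,v_i)+d_{G'}(v_i,v)=\bigl(d_G(z,v)-1\bigr)+d_{G'}(v_i,v)$. If $i>l$ the edge $e_i=v_iv$ survives and $d_{G'}(v,z)\le d_G(v,z)$; this also settles every $z\in A$ that lies in some $A^{i,u}$ with $i>l$, by choosing such a witnessing index. So (E2) reduces to proving, for a \emph{deleted} edge $e_i=(v,v_i)$, the reroute bound $d_{G'}(v_i,v)\le 2d_H\bigl(1+M_{e_1,\dots,e_l}\bigr)$ (an extra additive constant here is absorbed by the preceding inequality); plugging this into the display above for every $z\in A$ and summing produces the $2d_H|A^u_{e_1,\dots,e_k}(v)|\bigl(1+M_{e_1,\dots,e_l}\bigr)$ term.

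The reroute bound is the genuine obstacle and the step I expect to be hard. Here I would chase a walk in the abstract graph $Z_{e_1,\dots,e_l}$ starting at $z_i$: since $M_{e_1,\dots,e_l}$ bounds the diameter of its connected component, at most $M_{e_1,\dots,e_l}$ abstract edges suffice to reach from $z_i$ an index whose set $A^{j,u}$ has a crossing escaping the union $A^{1,u}\cup\dots\cup A^{l,u}$ of orphaned sets, and by Remark \ref{remark:surface} every such crossing has both endpoints in $V(H)$. Using that $H$ is biconnected — so that deleting the finitely many edges incident to $v$ leaves $V(H)\setminus\{v\}$ connected, and the new edge $vu$ with $u\in V(H)$ reattaches $v$ — one pays a within-$H$ detour of length at most $d_H$ to pass from one crossing to the next and, at the last stage, to get back to $v$; over the $1+M_{e_1,\dots,e_l}$ stages this accumulates to the claimed bound. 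Making this rigorous — that the escaping crossings survive the deletion, that each set $A^{i,u}$ stays connected in $G'$, and that each within-$H$ leg is genuinely at most $d_H$ — is the delicate heart of the argument; the degenerate cases ($A^{i,u}=\emptyset$, $v_i\notin A$, or $l=k$ with $v$ losing all of its $H$-edges) are then handled by the same idea or by cruder variants of (E1).
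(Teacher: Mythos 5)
Your proposal is correct and follows essentially the same route as the paper's proof: the $-(l-1)\alpha$ link term, the bound $d_{G'}(v,z)\le 1+d_G(u,z)$ for nodes outside the $A$ set, and for nodes inside it a reroute through at most $M_{e_1,\dots,e_l}$ crossings of the $Z$-graph (each with endpoints in $V(H)$ by Remark \ref{remark:surface}, hence costing $O(d_H)$ per hop) before escaping via a crossing guaranteed by the biconnectivity of $H$. The only difference is presentational — you bound $d_{G'}(v_i,v)$ and prepend $d_G(z,v_i)$, while the paper builds the $v$-to-$z$ path starting with the new edge $vu$ — and your accounting (walk cost $2d_H M$, exit cost about $d_H$, return to $v$ via $y\to u\to v$ costing $d_G(y,u)+1\le d_H+1$) lands on the same $2d_H(1+M_{e_1,\dots,e_l})$ bound.
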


\begin{proof}

The term $-(l-1)\alpha$ is clear because we are deleting the $l$ edges $e_{1},...,e_{l}$ and buying a link to $u$. Now let us analyse the difference of the sum of distances in the deviated graph $G'$ vs the original graph. For this purpose, let $z$ be any node from $G$. We distinguish two cases:

(A) If $z \not \in A^u_{e_{1},...,e_{l}}(v)$ then:

(1) Starting at $v$, follow the connection  $vu$.

(2) Follow a shortest path from $u$ to $z$ in the original graph. 

In this case we have that:  $d_{G'}(v,z) \leq 1+d_G(u,z)$. 

(B) If $z \in A^u_{e_{1},...,e_{l}}(v)$ then there exists some index $j$ such that $z \in A^{j,u}_{e_{1},...,e_{l}}(v)$. For this index $j$, let $CC_{j}=\left\{ z_{i_1},...,z_{i_s}\right\}$ be the connected component from $Z_{e_{1},...,e_{l}}$ where $z_{j}$ belongs. Since $H$ is biconnected there must exists an index $j_0$ with $z_{j_0} \in CC_{j}$ such that there exists a crossing $xy$ between the set $A^{j_0,u}_{e_1,...,e_l}(v)$ and the complement of $\cup_{h=1}^sA^{i_h,u}_{e_1,...,e_l}(v)$ with $x \in A^{j_0,u}_{e_1,...,e_l}$, and thus $x\in V(H)$ by Remark \ref{remark:surface}.  Finally, suppose that $z_{j_0}-z_{j_1}-...-z_{j_M} = z_j$ is a shortest path connecting $z_{j_0}$ with $z_j$ inside $CC_{j}$ for which we have $M\leq M_{e_1,...,e_l}$.  Now consider the following path in $G$:

\begin{figure}
\begin{center}
\includegraphics[scale=0.5]{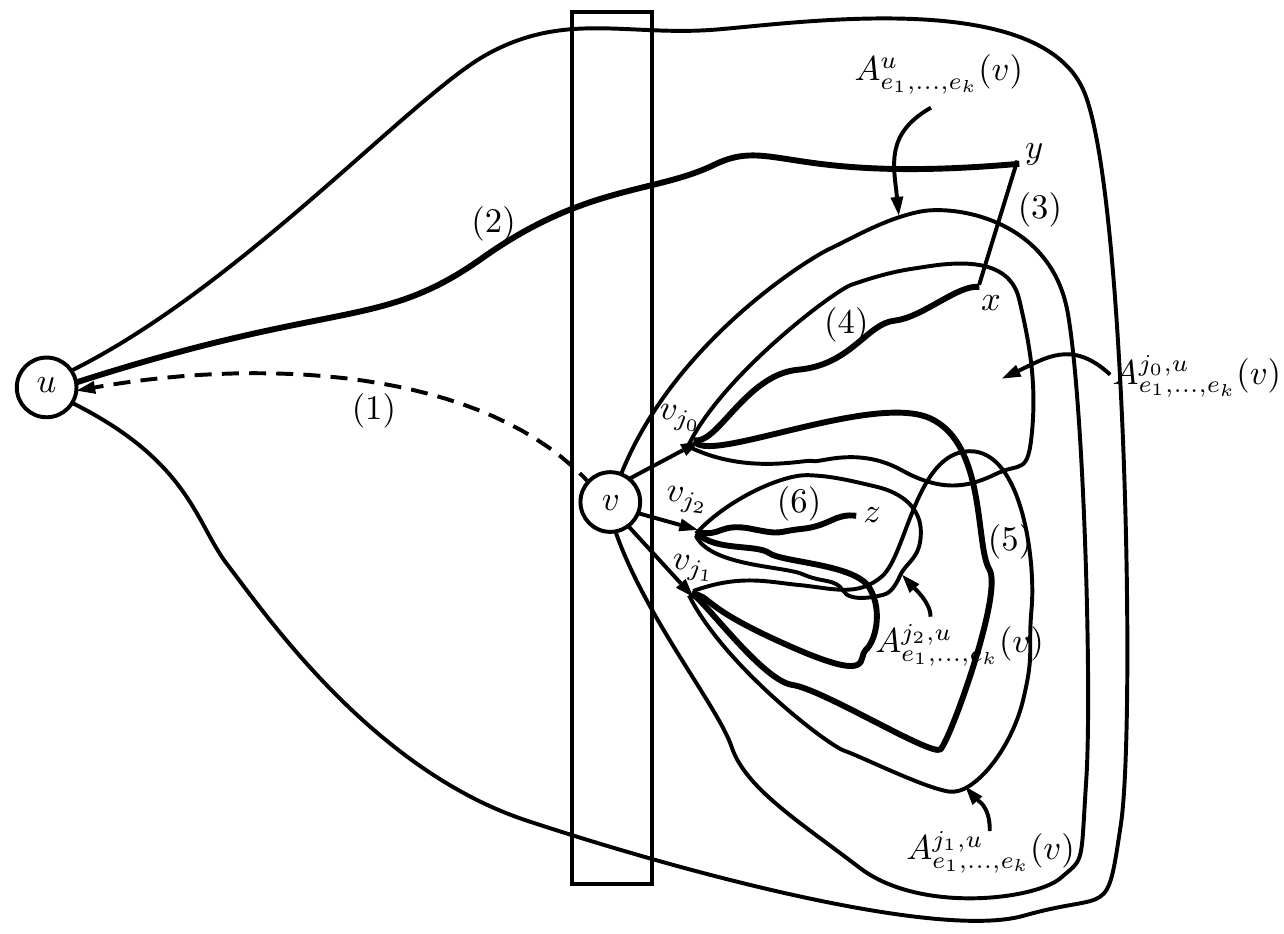}
\end{center}
\caption{The alternative path that $z$ could use to reach $v$ in the deviated graph.}
\end{figure}

 (1) Starting at $v$, follow the connection $vu$, which corresponds to one unit distance.

 (2) Follow any path from $u$ to $y$ which is contained in the complement of $\cup_{h=1}^sA^{i_h,u}_{e_1,...,e_l}(v)$ by the definition. Then we have $d_G(u,y)\leq d_G(u,v)+d_G(v,x)+1$. Therefore, in this case we count at most $d_G(u,v)+d_G(v,x)+1$ unit distances. 
		
(3) Cross the connection  $yx$, which corresponds to one unit distance.

(4) Go from $x$ to $v_{j_0}$ inside $A^{j_0,u}_{e_1,...,e_l}(v)$ giving exactly $d_G(x,v_{j_0})-1$ unit distances.

(5) Go from $v_{j_0}$ to $v_j=v_{j_t}$ inside $A^u_{e_1,...,e_l}(v)$. The length of the path inside $A^u_{e_1,...,e_l}(v)$ is no larger than $2d_HM$ since moving from $v_{j_{h}}$ to $v_{j_{h+1}}$ with $0 \leq h \leq M-1$ gives at most $2d_H$ distance units at each step. 

(6) Go from $v_j$ to $z$ inside $A^{j,u}_{e_1,...,e_l}(v)$ giving exactly $d_G(v_j ,z)-1$ unit distances.


In this case we have that:
\begin{align*}
d_{G'}(v,z) \leq & \overbrace{1}^{(1)} + \overbrace{d_G(u,v)+d_G(v,x)+1}^{(2)}+\overbrace{1}^{(3)}+\overbrace{d_G(x,v_{j_0})-1}^{(4)}+\overbrace{2d_HM}^{(5)}+\overbrace{d_G(v_j,z)-1}^{(6)}  \\
 =& 1+d_G(u,z)+(d_{G}(v,x)-1)+2d_HM+d_G(x,v_{j_0}) < 1+d_G(u,z)+2(M_{e_1,...,e_l}+1)d_H
\end{align*}
Combining the two inequalities we reach the conclusion: 
$$\Delta C \leq -(l-1)\alpha +n+D_G(u)-D_G(v)+2d_H(M_{e_1,...,e_l}+1)|A^u_{e_1,...,e_k}(v)|$$

\end{proof}

Recall that an \emph{independent set} from a given graph is a subset of nodes such that there is no edge between any pair of them.  Furthermore, the \emph{independence number} of any given graph is the maximum size of any of its independent sets. Recall the following well-known result: 

\begin{lemma} \cite{Wei1} \label{alfa(G)}The independence number of any graph $Z$ is greater than or equal $\sum_{v \in V(Z)} \frac{1}{1+deg_Z(v)}$
\end{lemma}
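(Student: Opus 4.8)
The plan is to prove this classical bound (due to Caro and Wei) by a short probabilistic argument that exhibits a random independent set whose \emph{expected} size already equals the claimed sum.

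First I would draw a total order $\prec$ on $V(Z)$ uniformly at random among all $|V(Z)|!$ permutations, and define the random set
$$I = \left\{ v \in V(Z) \ \middle|\ v \prec w \text{ for every } w \in V(Z) \text{ with } vw \in E(Z) \right\},$$
i.e. $v$ is placed in $I$ exactly when it precedes, in the chosen order, all of its neighbours in $Z$. Second, I would observe that $I$ is always independent: if $uv \in E(Z)$, then whichever of $u, v$ is the later one under $\prec$ has a neighbour (namely the other) preceding it, so it fails the defining condition of $I$; hence at most one endpoint of each edge lies in $I$.

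Third, I would compute $\Pr[v \in I]$ for a fixed vertex $v$. The event $\{v \in I\}$ depends only on the restriction of $\prec$ to the set $\{v\} \cup \{\,w : vw \in E(Z)\,\}$, which has $1 + deg_Z(v)$ elements, and it occurs precisely when $v$ is the $\prec$-smallest among them; by symmetry this has probability $\frac{1}{1 + deg_Z(v)}$. By linearity of expectation,
$$\mathbb{E}\bigl[|I|\bigr] = \sum_{v \in V(Z)} \Pr[v \in I] = \sum_{v \in V(Z)} \frac{1}{1 + deg_Z(v)},$$
so there is a specific outcome of $\prec$ for which $|I|$ is at least this quantity. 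Since $I$ is independent, the independence number of $Z$ is at least $\sum_{v \in V(Z)} \frac{1}{1 + deg_Z(v)}$, as claimed.

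Every step here is elementary, so I do not expect a genuine obstacle; the only points needing a little care are the independence of $I$ and the verification that $\{v \in I\}$ depends solely on the order of $v$ together with its neighbours. A purely deterministic alternative would be induction on $|V(Z)|$ (delete a minimum-degree vertex and a suitable neighbour, then apply the inductive hypothesis), but matching the exact bound that way requires more bookkeeping, so I would present the probabilistic proof above.
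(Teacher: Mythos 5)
Your proof is correct: it is the standard probabilistic (random permutation) argument for the Caro--Wei bound, and each step --- the independence of $I$, the computation $\Pr[v\in I]=\frac{1}{1+deg_Z(v)}$ via symmetry on the $1+deg_Z(v)$ relevant vertices, and the appeal to linearity of expectation --- is sound. The paper itself gives no proof of this lemma (it is cited as a known result from Wei's thesis), so there is nothing to compare against; your argument is a complete and valid justification of the stated inequality.
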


We will use lemma \ref{alfa(G)} in combination with the formula obtained in Proposition \ref{prop:formulageneral} to reach the main result of this subsection. The idea is that we can choose distinct starting subsets of edges $e_1,...,e_l$ and then, after completing it to the subset $e_1,...,e_k$, we can invoke Proposition \ref{prop:formulageneral}. When we combine the formula given in the proposition with simple other inequalities obtained from buying deviations, we get information about the number $l$ and the number $M_{e_1,...,e_l}$. Specifically, we end up deducing an upper bound for the cardinality of a maximum independent set of $Z_{e_1,...,e_k}$ and an upper bound for the maximum degree of any node from $Z_{e_1,...,e_k}$, too.  

\begin{theorem} 
\label{thm:4}
There exist constants $R_1$ and $R_2$ such that for any node $v \in V(H)$ we have $deg_H^+(v) \leq \max(R_1, R_2n^2/\alpha^2)$. 
\end{theorem}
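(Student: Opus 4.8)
The plan is to combine Proposition \ref{prop:formulageneral} with the Turán-type bound of Lemma \ref{alfa(G)} applied to the crossing graph $Z_{e_1,\dots,e_k}$ associated to a node $v\in V(H)$ with its full set of $H$-edges $e_1,\dots,e_k$ (so $k=\deg_H^+(v)$). The two quantities I want to control are (a) $\alpha(Z_{e_1,\dots,e_k})$, the independence number, and (b) $\Delta_{\max}$, the maximum degree of $Z_{e_1,\dots,e_k}$; once both are $O(1)+O((n/\alpha)^{?})$ the identity $k = |V(Z)| \le \alpha(Z)\cdot(\Delta_{\max}+1)$ (or rather, $k \le \alpha(Z)(\Delta_{\max}+1)$, which follows since any maximal independent set dominates and each vertex dominates at most $\Delta_{\max}+1$ vertices) closes the argument. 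Throughout I fix $u\in V(H)$ to be a node minimising $D(\cdot)$ on $V(H)$, so $D(u)-D(v)\le 0$ for all $v$, and I use $|A^u_{e_1,\dots,e_k}(v)| < n$ crudely, together with $d_H \le d_G = d$; but note $d$ is unbounded a priori, so the factor $2d_H|A^u|(1+M)$ must be killed — this is the crux, see below.

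For the independence number: pick a maximum independent set $\{e_{i_1},\dots,e_{i_l}\}$ of $Z_{e_1,\dots,e_k}$, i.e. $l=\alpha(Z)$ edges whose $A^i$-sets are pairwise non-crossing, so the crossing graph $Z_{e_{i_1},\dots,e_{i_l}}$ on this sub-collection has no edges, whence $M_{e_{i_1},\dots,e_{i_l}}=0$. Apply Proposition \ref{prop:formulageneral} to the deviation on $v$ that deletes exactly these $l$ edges and buys $vu$ (completing to the full set $e_1,\dots,e_k$ plays the role of the ambient collection). Since $v$ is at a \NE, $0\le \Delta C \le -(l-1)\alpha + n + 0 + 2d_H|A^u_{e_1,\dots,e_k}(v)|\cdot 1$. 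Here I still have the unbounded $2d_H|A^u|$ term; the point is that $|A^u_{e_{i_1},\dots,e_{i_l}}(v)|$ — the nodes whose shortest paths to $u$ route through $v$ via one of these specific edges — must itself be bounded, because those nodes pay a lot: a separate buying deviation (or the node $u$'s own incentives, or a degree/distance count) forces $|A^u(v)|$ to be $O(n/\alpha)$-ish. Concretely, I expect to use that the total "tree mass" hanging off $v$ toward $u$ cannot be large relative to what $v$ would save by shortcutting; combined this should give $l - 1 \le n/\alpha + (2d_H/\alpha)|A^u| \cdot O(1)$, and after bounding $|A^u| = O(n/\alpha)$ and absorbing, $\alpha(Z) = O(1) + O(n/\alpha)$, hence $O(1)+O((n/\alpha))$ — if the $d_H$ really cancels. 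If it does not cancel cleanly, the weaker $\alpha(Z)=O((n/\alpha)^2)+O(1)$ suffices for the theorem statement, which is the more conservative reading.

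For the maximum degree of $Z_{e_1,\dots,e_k}$: fix an edge $e_i$ with maximal $Z$-degree $\Delta_{\max}$ and let $e_i$ together with its $Z$-neighbours $e_{j_1},\dots,e_{j_{\Delta}}$ form a star in $Z$; then the crossing graph restricted to $\{e_i,e_{j_1},\dots\}$ is a star, so $M$ on this sub-collection is $2$ (diameter of a star with $\ge 2$ leaves), or $1$ if $\Delta=1$. Apply Proposition \ref{prop:formulageneral} to deleting these $\Delta+1$ edges: $0\le \Delta C\le -\Delta\alpha + n + 2d_H|A^u|(1+2)$, so again $\Delta \le n/\alpha + 6(d_H/\alpha)|A^u|$, and with $|A^u|=O(n/\alpha)$ this gives $\Delta_{\max} = O(1)+O(n/\alpha)$ (or $O((n/\alpha)^2)$ conservatively). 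Multiplying, $\deg_H^+(v)=k\le \alpha(Z)(\Delta_{\max}+1) = \max(R_1, R_2 n^2/\alpha^2)$ for suitable constants $R_1,R_2$, as claimed. The main obstacle, as flagged, is taming the $d_H$ factor in Proposition \ref{prop:formulageneral}: the honest route is to prove a lemma bounding $|A^u_{e_1,\dots,e_k}(v)|$ — and more importantly the number of nodes whose distance to $u$ genuinely increases by an amount comparable to $d_H$ under the deviation — by $O(n/\alpha)$, using that every such node could otherwise profitably buy a direct link toward $v$ (or toward $u$); this localisation is what converts the nominal $2d_H|A^u|(1+M)$ penalty into something of order $n$ plus a constant, and I expect it to consume most of the technical work, with the independence-number and max-degree extractions above being comparatively mechanical given that lemma.
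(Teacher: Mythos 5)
Your skeleton is the paper's: control the independence number and the maximum degree of the crossing graph $Z_{e_1,\dots,e_k}$ via Proposition \ref{prop:formulageneral} (with $M=0$ for an independent set and $M\le 2$ for a star), then multiply the two bounds. But the step you yourself flag as the crux --- killing the $2d_H|A^u_{e_1,\dots,e_k}(v)|(1+M)$ term --- is genuinely missing, and the route you gesture at does not close it. Bounding $|A^u(v)|$ by $O(n/\alpha)$ (even if you could prove it) still leaves a factor $d_H\cdot O(n/\alpha)/\alpha = O(d_H n/\alpha^2)$ in the bound on $l$, and $d_H$ is a priori unbounded; so neither your optimistic $O(n/\alpha)$ conclusion nor your ``conservative'' fallback $O((n/\alpha)^2)$ actually follows. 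Your choice of $u$ as the minimiser of $D(\cdot)$ on $V(H)$ is also the wrong one: it makes $D(u)-D(v)\le 0$, but it gives no lower bound on $d_G(u,v)$, and the only available control on $|A^u_{e_1,\dots,e_k}(v)|$ comes from the deviation in which $u$ buys a link to $v$, namely $0\le \Delta C_{buy}\le \alpha-(d_G(u,v)-1)|A^u_{e_1,\dots,e_k}(v)|$, which is vacuous when $u$ is close to $v$.

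The paper's fix is to invert your priorities. Take $u$ at distance $\lceil d_H/2\rceil$ from $v$ (assuming $d_H>2$). Then the buying deviation at $u$ gives $|A^u_{e_1,\dots,e_k}(v)|\le \alpha/(d_G(u,v)-1)$, hence $2d_H|A^u_{e_1,\dots,e_k}(v)|\le 4d_H\alpha/(d_H-2)\le 12\alpha$ --- the $d_H$ cancels against the distance $d_G(u,v)$, turning the nominal penalty into $12\alpha(1+M)$, i.e. a constant number of ``$\alpha$ units'' that get absorbed into the additive constants $14$ and $38$. The now-uncontrolled term $D(u)-D(v)$ is cancelled not by choosing $u$ specially but by adding the other inequality from the same buying deviation, $0\le\Delta C_{buy}\le\alpha+n+D(v)-D(u)$, to the inequality from Proposition \ref{prop:formulageneral}; this costs only an extra $\alpha+n$. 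This yields $l\le 2n/\alpha+14+12M_{e_1,\dots,e_l}$, hence independence number at most $2n/\alpha+14$ and maximum $Z$-degree at most $2n/\alpha+38$, and the product bound follows (the paper uses Wei's bound, Lemma \ref{alfa(G)}, which is equivalent to your $k\le\alpha(Z)(\Delta_{\max}+1)$). Finally, you omit the case $d_H\le 2$, where no node $u$ at distance greater than $2$ from $v$ exists in $H$; the paper handles it separately by deleting each $e_i$ individually, getting $\alpha\le(2d_H-1)|A^v_{e_i}(v)|\le 3|A^v_{e_i}(v)|$ and summing over the disjoint sets $A^v_{e_i}(v)$ to obtain $k\le 3n/\alpha$.
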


\begin{proof}

We distinguish between the cases $d_H > 2$ or $d_H \leq 2$. 

First, suppose that $d_H > 2$. Suppose that $v \in V(H)$ has bought $k$ edges from $H$ and let $e_1,...,e_l$ be any $l$-size subset of such edges, assuming that $e_{l+1},...,e_k$ are the remaining ones. Also, pick $u \in V(H)$ any node at distance $d_H/2$ from $v$, which always exists. By Proposition \ref{prop:formulageneral} we know that if we consider the deviation in $v$ that consists in deleting $e_{1},...,e_{l}$ and buying a link to $u$ we get that if $\Delta C$ is the corresponding cost difference: 

\begin{equation}
\label{eq:1}\Delta C \leq -(l-1)\alpha +n+D_G(u)-D_G(v)+2d_H(M_{e_{1},...,e_{l}}+1)|A^u_{e_1,...,e_k}(v)| 
\end{equation}

Now consider the deviation in $u$ that consists in buying a link to $v$. Let $\Delta C_{buy}$ be the corresponding cost difference. Then:

\begin{equation}
\label{eq:2}\Delta C_{buy} \leq \alpha + n+ D_G(v)-D_G(u)
\end{equation}

By the definition of the $A$ sets, another inequality can be obtained considering the same deviation:

$$\Delta C_{buy} \leq \alpha - (d_G(u,v)-1)|A^u_{e_1,...,e_k}(v)|$$

Imposing that $G$ is a \NE and using that $d_H > 2$ we now deduce:  

 \begin{equation}
\label{eq:4} 2d_H |A^u_{e_1,...,e_k}(v)| \leq 2d_H\frac{\alpha}{d_G(u,v)-1} = 4 d_H\alpha /(d_H-2) \alpha  \leq 12\alpha
 \end{equation}

 Therefore, combining (\ref{eq:1}), (\ref{eq:2}) and (\ref{eq:4}) we obtain: 

$$\Delta C+ \Delta C_{buy} \leq -(l-2)\alpha+2n+12\alpha + 12\alpha M_{e_{1},...,e_{l}}$$

Imposing that $G$ is a \NE, then  

$$l \leq \frac{2n}{\alpha}+14 + 12M_{e_{1},...,e_{l}}$$

Now that we have reached this result consider the following two observations. In the first place, if we assume that $z_2,...,z_{l}$ are exactly all the neighbours of $z_1$ from $Z_{e_1,...,e_k}$ then $M_{e_1,...,e_l}\leq 2$ and we obtain an upper bound on the maximum degree in any node in $Z_{e_1,...,e_k}$:  $ l \leq \frac{2n}{\alpha}+38$. 

Secondly, if we assume that $z_1,...,z_m$ conform an independent subset of nodes from $Z_{e_1,...,e_k}$ then $M_{e_1,...,e_m}=0$ and we obtain an upper bound on the independence number of $Z_{e_1,..,e_k}$:   $m \leq \frac{2n}{\alpha}+14$.

\noindent Therefore, from these two observations together with Lemma \ref{alfa(G)}  we deduce: 

$$\frac{2n}{\alpha}+14 \geq k\frac{1}{38+\frac{2n}{\alpha}}$$

\noindent From here the conclusion. 

\vskip 10pt

Now consider the case $d_H \leq 2$. We claim that if $d_H \leq 2$ then the number of links from $H$ bought by any player is at most $3n/\alpha$.

Let us suppose that $v\in V(H)$ has bought the edges $e_1,...,e_k$.  Then, deleting an edge $e_i \in E(H)$ gives a cost difference $\Delta C_{delete}$ that satisfies: 

$$\Delta C_{delete} \leq -\alpha + (2d_H -1)|A^{v}_{e_i}(v)| \leq -\alpha+3|A^v_{e_i}(v)|$$ 

Imposing that $G$ is a \NE graph we obtain $\alpha \leq 3 |A^v_{e_i}(v)|$. But the subsets $A^v_{e_i}(v)$ are mutually disjoint for distinct subindexes $i=1,...,k$. Therefore, adding all the corresponding inequalities we obtain that $k \alpha \leq 3n$ implying $k \leq \frac{3n}{\alpha}$.

\end{proof}

\subsection{The number of edges in $H$ owned by a player.}

\label{subsec:precise}

\noindent Let $H$ be a biconnected component of any equilibrium graph $G$. Now we can use the result stated in Theorem \ref{thm:4} to give an improved upper bound for the directed degree in $H$. More precisely, we see that in any non-tree \NE graph $G$ of diameter larger than some constant $D_1$, the number of links bought by any player from any of its biconnected components $H$ is at most $2 \frac{n}{\alpha}+6$ provided that $n/\alpha$ is larger than some constant $K$.

Before stating the main result recall the following remark:
 
\begin{remark}
\label{rem:variationdistances}
Let $v \in V(H)$ with $deg_H^+(v) = l$. Then, when removing $v$ and all the directed edges bought by $v$, the distance between any two nodes $z_1,z_2 \in V(G)$ with $z_1,z_2 \neq v$ increases in strictly less than $2d_Hl$ distance units. 
\end{remark}

\begin{proof}
Let $u = z_1$ and let $e_1,...,e_l$ be the $l$ edges bought by $v$ in $H$. Look at the proof of Proposition \ref{prop:formulageneral}.  If we set $z=z_2$, the distance between $u$ and $z$ in the graph obtained removing the edges $e_1,..,e_l$, is at strictly less than $d_G(u,z)+2d_H(1+M_{e_1,...,e_l})$. Since $M_{e_1,...,e_l}$ is at most $l-1$ then we obtain that the variation of distances is strictly less than $2d_Hl$, as we wanted to see.

\end{proof}

\begin{theorem}\label{thm:degbound2-k}
There exist constants $D_1,K$ such that any equilibrium graph of diameter greater than $D_1$ for $n/\alpha > K$, satisfies that $deg^+_H(v) \leq 2\frac{n}{\alpha}+6$ for every $v \in V(H)$. 
\end{theorem}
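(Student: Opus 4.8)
The plan is to fix a vertex $v\in V(H)$ and bound $k:=deg_H^+(v)$ directly, using Theorem~\ref{thm:4} only as the a priori estimate $k=O((n/\alpha)^2)$. Two preliminary observations drive everything. First, $G$-geodesics between two vertices of $H$ stay inside $H$ (an excursion into a tree $S(w)$ must enter and leave at the same cut vertex $w$ and can be cut), so $d_G(x,y)=d_H(x,y)$ for $x,y\in V(H)$; hence for every $v\in V(H)$ there is $u\in V(H)$ with $d_G(u,v)=ecc_H(v)\ge\lceil d_H/2\rceil$. Second, the trees hanging off $H$ have depth bounded by an absolute constant in the range $\alpha<n$: a node at depth $\ell$ in a tree $S(w)$ with $|S(w)|\le n/3$ would profitably buy a link to a $D_G$-minimiser once $\ell>3\alpha/n+3$ (and since $\sum_w|S(w)|=n$ there are at most two large trees, handled similarly by an internal centroid argument); consequently $diam(G)\le d_H+O(1)$, so the hypothesis $diam(G)>D_1$ makes $d_H$ as large as we please by choosing $D_1$ large.

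With $u$ as above I would establish several inequalities. (i) Considering the deviation ``$u$ buys a link to $v$'' and noting that every $z\in A^u_{e_1,\dots,e_k}(v)$ then saves at least $d_G(u,v)-1$ distance units, the \NE condition gives $|A^u_{e_1,\dots,e_k}(v)|\le \alpha/(d_G(u,v)-1)\le 2\alpha/(d_H-2)$. (ii) For the crossing graph, $M_{e_1,\dots,e_k}+1\le |A^u_{e_1,\dots,e_k}(v)|$: a diametral path $z_{j_0}-\cdots-z_{j_M}$ in the largest component of $Z_{e_1,\dots,e_k}$ has consecutive sets $A^{j_i,u}_{e_1,\dots,e_k}(v)$ crossing, so their union $W$ is connected, $W\subseteq A^u_{e_1,\dots,e_k}(v)$, and $W$ contains the $M+1$ distinct out-neighbours $v_{j_0},\dots,v_{j_M}$ of $v$. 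Now apply Proposition~\ref{prop:formulageneral} with $\ell=k$ (sell all $k$ edges of $H$ owned by $v$ and buy one link to $u$) and add the cost difference of ``$u$ buys a link to $v$''; the $D_G(u)$ and $D_G(v)$ terms cancel, and the \NE condition together with (i) and (ii) yields
$$k\le \frac{2n}{\alpha}+2+\frac{8\,d_H\,\alpha}{(d_H-2)^2}.$$
A refinement — estimating the detour inside $A^u_{e_1,\dots,e_k}(v)$ by its cardinality instead of by $2d_H\,M_{e_1,\dots,e_k}$ as in the proof of Proposition~\ref{prop:formulageneral} — replaces the last term by $O\big(\alpha/(d_H-2)^2\big)$. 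A last, independent bound comes from single-edge deletions: deleting $e_i=vv_i$ raises $D_G(v)$ by less than $2d_H\,|A^v_{e_i}(v)|$ (Remark~\ref{rem:variationdistances}), the sets $A^v_{e_i}(v)$ are pairwise disjoint, and the \NE condition gives $k< 2\,d_H\,n/\alpha$.

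When $d_H$ is large enough relative to $\alpha$ — concretely $(d_H-2)^2\ge 3\alpha$ with the refined error term, which certainly holds once $d_H\ge 2+\sqrt{3\alpha}$ — the error term is at most $4$ and we obtain $k\le 2n/\alpha+6$. Since $d_H\ge diam(G)-O(1)>D_1-O(1)$, this settles all $\alpha$ up to a threshold of order $D_1^2$, and in particular all $\alpha$ bounded by a constant.

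The hard part is the complementary regime: $\alpha$ large compared with $d_H$ (so $d_H>D_1-O(1)$ is nonetheless much smaller than $\sqrt{\alpha}$). There the ``sell all and rebuy'' estimate only gives $k\le 2n/\alpha+O(\alpha/d_H)$ and the deletion estimate only $k\le 2d_H n/\alpha$, and neither is $\le 2n/\alpha+6$ on its own. This is exactly the point at which the a priori bound $k=O((n/\alpha)^2)$ of Theorem~\ref{thm:4} must be reinvested: combining it with deletions of carefully chosen subsets $e_1,\dots,e_\ell$ together with a single far purchase — and exploiting that the components of $Z_{e_1,\dots,e_\ell}$ are then small — should drive $k$ down to $2n/\alpha+6$. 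I expect making this final step uniform in $\alpha$, and correctly absorbing the additive $O(1)$ slack into the constant $6$, to be the main technical obstacle.
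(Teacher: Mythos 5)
Your attempt is incomplete, and you have correctly located the gap yourself: the regime where $\alpha$ is large compared with $d_H^2$ (which is unavoidable here, since the hypotheses only give $d_H \geq D_1 - O(1)$, a constant, while $\alpha$ may be as large as $n/K$) is exactly the hard case, and your ``sell all and rebuy one far link'' estimate leaves an additive error of order $\alpha/d_H$ that cannot be absorbed into the constant $6$. The single-edge-deletion bound $k < 2d_H n/\alpha$ does not help either, since its leading constant is $2d_H$ rather than $2$. Your preliminary reductions and inequality (ii) bounding $M_{e_1,\dots,e_k}+1$ by $|A^u_{e_1,\dots,e_k}(v)|$ are fine, but they only settle the easy half of the parameter range.

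The missing idea in the paper's proof is to replace the single deviation by a \emph{massive averaged family} of deviations: for a fixed small integer $k$ (ultimately $k=3$), and for \emph{every} $k$-tuple $(u_1,\dots,u_k)$ of vertices keeping the deviated graph connected, one considers the deviation in which $v$ deletes all $l$ of its $H$-edges and buys links to $u_1,\dots,u_k$, together with the $k$ deviations in which each $u_i$ buys a link back to $v$. Summing the resulting \NE inequalities over all such tuples, the creation cost contributes $\alpha(2k-l)$ per tuple, the ``good'' tuples (some $u_i\notin A^u(v)$) contribute at most $2$ per witness $u$, and the ``bad'' event that all $k$ targets fall in $A^u(v)$ — the only case where the $2dl$ detour of Remark \ref{rem:variationdistances} is paid — occurs with relative frequency about $(|A^u(v)|/n)^k$. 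Using $|A^u(v)|\le\alpha/(d_G(u,v)-1)$ for far $u$, $|B_{d/8-1}(v)|\le 2\alpha/d$ for the near layers, and Theorem \ref{thm:4} as the a priori bound $l=O(n^2/\alpha^2)$, the total error term is $O((\alpha/n)^{k-2})+O(d^{-(k-1)}(\alpha/n)^{k-3})$ plus a separately handled $d_G(u,v)=1$ contribution, which is $<\delta$ once $d>D_1$ and $n/\alpha>K$. This yields $l\le 2k+2n/\alpha+\delta$, and the choice $k=3$ is precisely what produces the additive constant $6$. Without this averaging mechanism (or an equivalent way to kill the $O(\alpha/d_H)$ detour term uniformly in $\alpha$), the proof does not close.
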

\begin{proof}
Let $v$ be any node from $V(H)$ and $(v,v_1),...,(v,v_l) \in E(H)$ $l$ distinct links from $H$ bought by $v$. Now let $k=O(1)$ be a positive integer. For any $k-$tuple $(u_1,...,u_k)$ of nodes (not necessarily distinct) from $V(G)$ let $G_{u_1,...,u_k}$ be the graph in which $v$ deletes the $l$ links $(v,v_1),...,(v,v_l)$ and buys $k$ links to $u_1,...,u_k$.  Let $X$ be the set of all $k-$tuples $(u_1,...,u_k)$ of nodes from $V(G)$ such that $G_{u_1,...,u_k}$ is connected. 

Now, for every $k-$tuple $(u_1,...,u_k)$ from $X$ consider the deviation that consists in deleting $(v,v_1),...,(v,v_l)$ and buying $k$ links to $u_1,...,u_k$. Furthermore, for every $i=1,...,k$ consider the deviation in $u_i$ that consists in buying a link to $v$. Let $S_{u_1,...,u_k}$ be the sum of the corresponding cost differences associated to these $k+1$ deviations.  

First, the term corresponding to the creation cost in $S_{u_1,...,u_k}$ is clearly $\alpha(2k-l)$. Let us now evaluate the term corresponding to the variation of distances in $S_{u_1,...,u_k}$. To this purpose, pick any node $u \in V(G)$. If $u \in S(v)$ then the variation of distances relative to $S_{u_1,...,u_k}$ is zero. Otherwise, let $u \in V(G) \setminus  S(v)$ and let $A^u(v)$ be the set of nodes $z$ such that every shortest path from $z$ to $u$ goes through $v$. We now distinguish the two cases depending on whether it holds $\left\{u_1,...,u_k\right\} \cap (A^u(v))^c = \emptyset$.


(i) First, suppose that there exists $i\in \left\{1,...,k\right\}$ such that $u_i \not \in A^u(v)$. On the one hand, there exists a path in $G_{u_1,...,u_k}$ from $u_i$ to $u$ that does not go through $v$ so the distance from $v$ to $u$ in $G_{u_1,...,u_k}$ is at most $1+d_G(u_i,u)$. On the other hand, when buying the link from $u_i$ to $v$ the distance from $u_i$ to $u$ in the deviated graph is at most $1+d_G(u,v)$. Furthermore, the distances between $u_j$ and $u$ with $j \neq i$ are less than or equal the distances in the original graph the deviations on such nodes consist in only buying a link. Therefore, the sum of the distance changes is at most 

$$(1+d_G(u_i,u)-d_G(u,v))+(1+d_G(u,v)-d_G(u,u_i)) =2$$

(ii) Otherwise, suppose that  $u_1,...,u_k \in A^u(v)$.

By Remark \ref{rem:variationdistances} the distance change from $v$ to $u$ relative to $S_{u_1,...,u_k}$ is at most $2dl$. Furthermore, the distance changes for the nodes $u_i$ is non-positive because the deviations on such nodes consist in only buying a link. Therefore, the sum of the distance changes is at most $2dl$ in this case. 

\begin{center}
\includegraphics[scale=0.45]{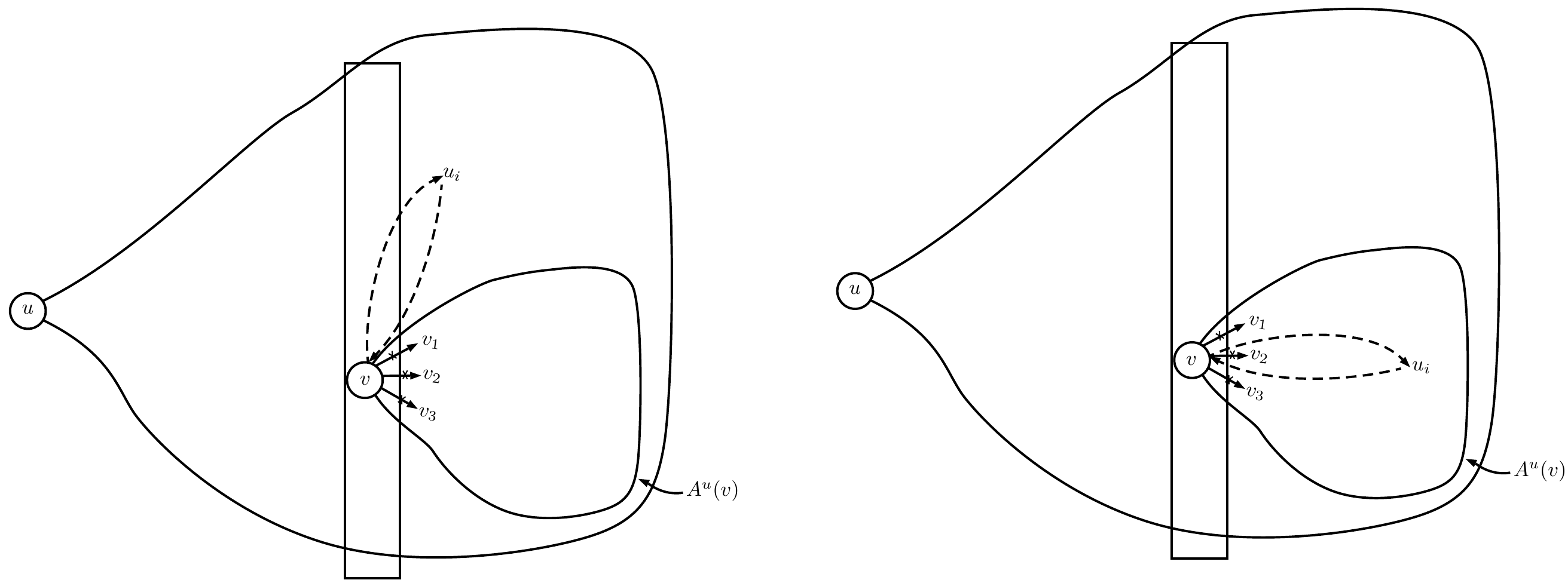}
\end{center}

Now let us evaluate the following sum $S = \sum_{(u_1,...,u_k) \in X} S_{u_1,...,u_k}$. By the previous observations: 

$$S \leq |X|\alpha(2k-l) + \left( 2n|X| +\sum_{u \in V(G)}2dl(|A^u(v)|^k-|S(v)|^k)\right) $$

Where the term $-|S(v)|^k$ comes from the fact that if all $u_i$ are in $S(v)$ then the deviated graph would be disjoint. Imposing that $G$ is a \NE, then: 

$$l \leq 2k+\frac{2n}{\alpha}+\frac{2dl}{\alpha}\sum_{u \in V(G)}\frac{|A^u(v)|^k-|S(v)|^k}{|X|} $$

Clearly, since $H$ is biconnected and $v\in V(H)$ then $G_{u_1,...,u_k}$ could be disconnected only when $u_1,...,u_k\in S(v)$ and thus $|X| \geq n^k-|S(v)|^k$. 
Therefore:

$$ \frac{|A^u(v)|^k-|S(v)|^k}{|X|} \leq \frac{|A^u(v)|^k-|S(v)|^k}{n^k - |S(v)|^k} \leq \frac{|A^u(v)|^k}{n^k} $$

Then it is enough if we show that the expression $ \frac{2dl}{\alpha} \sum_{u \in V(G)}\frac{|A^u(v)|^k}{n^k}$ can be made as small as we want. We distinguish between the cases $d_G(u,v) >1$ and $d_G(u,v) = 1$:

\vskip 5pt

(A) If $d_G(u,v) > 1$ then we know that $|A^u(v)| \leq \frac{\alpha}{d_G(u,v)-1}$ because if $u$ buys a link to $v$ he improves in $d_G(u,v)-1$ distance units for every node from $A^u(v)$. Therefore, using the notation $A_i(u)$ for the subset of nodes from $G$ at distance $i$ from $u$, we get: 

$$\frac{2dl}{\alpha} \sum_{ \left\{u \mid d_G(u,v)> 1\right\} } \frac{|A^u(v)|^k}{n^k} \leq \frac{2dl}{\alpha} \sum_{i \geq 2} \frac{\alpha^k}{n^k(i-1)^k}|A_i(v)|$$

Now, depending on whether $i$ is such that $2 \leq i \leq d/8-1$ or $d/8 \leq i$ we obtain that  $ \frac{\alpha^k}{n^k(i-1)^k} \leq  \frac{\alpha^k}{n^k}$ or  $ \frac{\alpha^k}{n^k(i-1)^k} \leq  \frac{\alpha^k}{n^k(d/8-1)^k}$, respectively. Therefore, defining $B_i(v) = \cup_{j \leq i }A_j(v)$, we obtain: 

$$ \frac{2dl}{\alpha} \sum_{i \geq 2} \frac{\alpha^k}{n^k(i-1)^k}|A_i(v)| < \frac{2dl}{\alpha}  \frac{\alpha^k}{n^k}|B_{d/8-1}(v)|+\frac{2dl}{\alpha} \frac{\alpha^k}{(d/8-2)^kn^k}(n-|B_{d/8-1}(v)|) $$

We have that $\frac{d}{(d/8-1)^k} = O(1/d^{k-1})$. Choosing $K$ to be large enough, if $n/\alpha>K$ by hypothesis then $l = O(n^2/\alpha^2)$ using Theorem \ref{thm:4}. Finally, $|B_{d/8-1}(v)| \leq 2\alpha/d$ because otherwise any node at distance at least $d/2$ from $v$, which always exists, would have incentive to buy a link to $v$. With these results then: 

$$ \frac{2dl}{\alpha} \sum_{ \left\{u \mid d_G(u,v)> 1\right\} } \frac{|A^u(v)|^k}{n^k}  = O\left(\frac{\alpha^{k-2}}{n^{k-2}} \right)+  O\left( \frac{1}{d^{k-1}} \frac{\alpha^{k-3}}{n^{k-3}} \right)$$

Therefore, since $k=O(1)$, for any small enough positive constant $\delta$ there exist quantities $K_1=K_1(\delta)$ and $D_1=D_1(\delta)$ such that if $n/\alpha > K_1$ and $d > D_1$ then  $\frac{2dl}{\alpha} \sum_{ \left\{u \mid d_G(u,v)> 1\right\} } \frac{|A^u(v)|^k}{n^k}\leq \delta$ and moreover, $K_1,D_1=O(1)$, thus showing what we wanted to prove.

\vskip 5pt

(B) Now suppose that $d_G(u,v)=1$. If $u \not \in \left\{ v_1,...,v_l\right\}$ then there is no increment in distance units. Otherwise, for each $v_i$ with $1 \leq i \leq l$, $|A^{v_i}(v)| \leq n$. Choosing $K$ to be large enough, if $n/\alpha>K$ by hypothesis then $l = O(n^2/\alpha^2)$ using Theorem \ref{thm:4}, and therefore: 

$$\frac{2dl}{\alpha} \sum_{ \left\{u \mid d_G(u,v)= 1\right\} } \frac{|A^u(v)|^k}{n^k} \leq \frac{2dl^2}{\alpha} = O(dn^4/\alpha^5)$$

Let $\delta > 0$ any small enough positive constant. First suppose that $\alpha > n^{\frac{4}{5}+\delta}$. Then, using the relation $d = 2^{O(\sqrt{\log n})}$ from \cite{Demaineetal:07} we deduce that:

$$\frac{2dl}{\alpha} \sum_{ \left\{u \mid d_G(u,v)= 1\right\} } \frac{|A^u(v)|^k}{n^k} \leq \frac{d}{n^{5\delta}} = o(1)$$

On the other hand, if $\alpha \leq n^{\frac{4}{5}+\delta}$ with $\delta$ small enough then we are in the range $\alpha = O(n^{1-\delta_1})$ with $n$ large enough so that $\delta_1 \geq 1/\log n$ and therefore $d = O(1)$ by the main result from \cite{Demaineetal:07}. 

\vskip 5pt

Therefore, gathering together these bounds and choosing $k=3$ we reach the conclusion that, there exists constants $D_1,K$ such that if $d>D_1$ and $n/\alpha>K$, then the expression $ \frac{2dl}{\alpha} \sum_{u \in V(G)}\frac{|A^u(v)|^k}{n^k}$ can be made sufficiently small in both cases (A) and (B). 

\end{proof}

\begin{theorem} There exist constants $R,D_1$ such that any equilibrium graph of diameter greater than $D_1$ satisfies  $\deg_H^+(v) \leq \max(R, \frac{2n}{\alpha}+6)$ for every $v\in V(H)$.
\end{theorem}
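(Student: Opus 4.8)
The plan is to combine the two main results of this subsection, Theorem \ref{thm:4} and Theorem \ref{thm:degbound2-k}, by a simple case analysis on the size of the ratio $n/\alpha$ relative to the constant $K$ supplied by Theorem \ref{thm:degbound2-k}. Concretely, fix the constants: let $R_1,R_2$ be as in Theorem \ref{thm:4}, let $D_1,K$ be as in Theorem \ref{thm:degbound2-k}, and set $R := \max(R_1, R_2 K^2)$. The diameter threshold in the statement is the very same $D_1$. Now let $G$ be any \NE graph with $diam(G) > D_1$, let $H$ be a biconnected component of $G$ (if $G$ has none, every $\deg_H^+(v)$ is vacuously zero and there is nothing to prove), and fix an arbitrary node $v \in V(H)$.

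First I would treat the case $n/\alpha > K$. Here the hypotheses of Theorem \ref{thm:degbound2-k} are met ($diam(G) > D_1$ and $n/\alpha > K$), so it applies directly and yields $\deg_H^+(v) \leq 2n/\alpha + 6$, which is trivially at most $\max\!\left(R, 2n/\alpha + 6\right)$.

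Next, the case $n/\alpha \leq K$. Theorem \ref{thm:4}, which imposes no condition on the diameter, gives $\deg_H^+(v) \leq \max\!\left(R_1, R_2 n^2/\alpha^2\right)$. Since $n/\alpha \leq K$ we have $R_2 n^2/\alpha^2 \leq R_2 K^2$, hence $\deg_H^+(v) \leq \max(R_1, R_2 K^2) = R \leq \max\!\left(R, 2n/\alpha + 6\right)$. Combining the two cases establishes $\deg_H^+(v) \leq \max\!\left(R, 2n/\alpha + 6\right)$ for every $v \in V(H)$, as claimed.

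There is essentially no obstacle remaining: all of the difficulty has already been absorbed into Theorem \ref{thm:4} (the rough quadratic-in-$n/\alpha$ estimate obtained via the generalised $A$-sets and the independence-number argument) and into Theorem \ref{thm:degbound2-k} (the bootstrap from the quadratic bound up to the near-optimal linear bound $2n/\alpha + 6$, valid once both $n/\alpha$ and the diameter are sufficiently large). The only point requiring mild care is that the single constant $R$ must simultaneously dominate the additive constant $R_1$ and the largest value $R_2 n^2/\alpha^2$ that the quadratic bound can attain while $n/\alpha$ stays below $K$; the choice $R = \max(R_1, R_2 K^2)$ does exactly that, and the diameter threshold $D_1$ is inherited verbatim from Theorem \ref{thm:degbound2-k}, so no new large-diameter argument is needed.
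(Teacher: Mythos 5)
Your proposal is correct and follows exactly the same route as the paper: a two-case split on whether $n/\alpha$ exceeds the constant $K$ from Theorem \ref{thm:degbound2-k}, invoking that theorem in the first case and Theorem \ref{thm:4} (with $R_2 n^2/\alpha^2 \leq R_2K^2$) in the second, with $R=\max(R_1,R_2K^2)$. Nothing further is needed.
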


\begin{proof}
Let $K,D_1$ be the constants for which Theorem \ref{thm:degbound2-k} holds if $d=diam(G) > D_1$ and $n/\alpha > K$. Suppose that $H$ is a biconnected component of $G$. If the diameter $d=diam(G) > D_1$ then it holds that either $deg_H^+(v) \leq 2n/\alpha+6$ if $n/\alpha > K$ by Theorem \ref{thm:degbound2-k} or, otherwise, $n/\alpha \leq K$ and then $deg_H^+(v) \leq \max(R_1n^2/\alpha^2,R_2) < \max(R_1K^2,R_2)$ by Theorem \ref{thm:4}.  
\end{proof}

\subsection{Unique non-trivial biconnected component}
\label{subsec:biconnected}

In \cite{Mihalakswap} the authors show that any Assymetric-Swap Equilibrium graph has at most one non-trivial $2-$edge-connected component. In the following, we show that in fact, every \NE graph having diameter larger than a constant $D_2$ has at most one non-trivial biconnected component.  

To this purpose we distinguish two cases depending on the value of $\alpha$. 

\begin{lemma}\label{lem:biconnected1}
For $\alpha < \frac{n-1}{2}$ any equilibrium $G$ has at most one non-trivial biconnected component.
\end{lemma}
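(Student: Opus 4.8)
The strategy is a proof by contradiction: suppose $G$ is an equilibrium with $\alpha < \tfrac{n-1}{2}$ having two distinct non-trivial biconnected components $H_1$ and $H_2$. Since biconnected components pairwise share at most one vertex, and the block-cut tree of $G$ is a tree, there is a well-defined ``path'' between $H_1$ and $H_2$ in this block-cut structure. Pick a vertex $a\in V(H_1)$ and a vertex $b\in V(H_2)$ realizing, among all such choices, the configuration we want; the natural choice is to take $a$ to be the cut vertex (or a vertex) of $H_1$ that lies on every shortest path from $V(H_1)$ to $V(H_2)$, and symmetrically $b$ for $H_2$. The key structural fact is that $S(a)$ (the component hanging off $H_1$ at $a$, in the notation of the paper) contains all of $H_2$, and symmetrically $S(b)\supseteq H_1$; in particular every shortest path in $G$ between a vertex of $H_1\setminus\{a\}$ and a vertex of $H_2\setminus\{b\}$ passes through both $a$ and $b$.

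Next I would exploit biconnectedness of $H_1$ to extract a cheap swap or buy/sell deviation. Because $H_1$ is non-trivial and biconnected, some vertex $v$ of $H_1$ owns at least one edge $e=(v,w)$ of $H_1$; more usefully, there is a vertex $v\in V(H_1)$, $v\neq a$, owning an edge $e\in E(H_1)$ such that deleting $e$ keeps $G$ connected (biconnectedness of $H_1$ guarantees a second path inside $H_1$ between $v$ and $w$). Consider the deviation in which $v$ deletes $e$ and instead buys a link to $b$ (the ``entry point'' of $H_2$). The creation cost change is $0$. For the distance analysis: all nodes in $S(b)\supseteq V(H_2)$, which number at least $|V(H_2)|\ge 3$ — and in fact, by routing through $b$, a large fraction of $V(G)$ — get strictly closer to $v$, since the old route from $v$ to such a node went $v\leadsto a\leadsto\cdots\leadsto b\leadsto\cdot$ while the new one is the direct edge $vb$ followed by the tail; the saving per such node is $d_G(v,b)-1\ge 1$. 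Meanwhile the nodes that lose distance are confined to the ``small side'' $A^v_e(v)$ of the deleted edge inside $H_1$, and each loses at most $2d_{H_1}$ (or can be re-routed via $H_1$'s alternate path, bounding the loss using the diameter of $H_1$). Quantifying the gain is where the hypothesis $\alpha<\tfrac{n-1}{2}$ enters: by a counting argument, at least $(n-1)/2$ nodes lie strictly on the far side (the side containing $H_2$), because otherwise $b$, or some node at large distance, would itself have a profitable buy deviation toward $a$ of the type used throughout the paper (a node $u$ with $|A^u(v)|$ large buys a link). I would make this precise by the standard estimate $|A^u(v)|\le \alpha/(d_G(u,v)-1)$ together with a symmetric bound from the other component, forcing a genuine net negative $\Delta C$, contradicting the \NE condition.

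The main obstacle I anticipate is making the distance bookkeeping airtight, specifically the claim that ``many'' nodes get cheaper after the swap while only ``few'' get more expensive, uniformly in the position of $v$ inside $H_1$. The clean way is to choose $v$ and $e$ so that $|A^v_e(v)|$ is as small as possible — e.g.\ $v$ adjacent to $a$ along $H_1$, or $v$ a vertex of a BFS-deepest part of $H_1$ relative to $a$ — so that the ``loss'' side is provably tiny (bounded by $\alpha$ via a sell/buy inequality applied to nodes in $A^v_e(v)$), while the ``gain'' side is the complement and hence of size at least $n - O(\alpha)$. Since $\alpha<(n-1)/2$, the gain side has size $>(n-1)/2$, each contributing a saving of at least $1$, which dominates the loss term plus the at most $\alpha$ creation slack, yielding $\Delta C<0$. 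I would also need the auxiliary observation (true in any \NE, and essentially Remark \ref{remark:surface}/the $S(\cdot)$ machinery) that deleting a single non-bridge edge of $H_1$ keeps the graph connected, so the deviation is legal; this is immediate since the endpoints of $e$ remain joined by the other path in the biconnected $H_1$.
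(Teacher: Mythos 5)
Your plan takes a genuinely different route from the paper's, and as sketched it does not close. The paper's proof is far more economical: it picks a cut vertex $v$ separating the two non-trivial blocks, so that $V(G)=V(H_1)\cup V(H_2)\cup\{v\}$, lets $H_i$ be the side with $|V(H_i)|\le (n-1)/2+1$, and observes that any $z\in V(H_i)$ with $d_G(v,z)\ge 2$ could buy the edge $zv$, gaining at least $(d_G(v,z)-1)(n-|V(H_i)|)\ge (n-1)/2>\alpha$; hence the whole small side sits at distance $\le 1$ from $v$. The cycle inside that side's block then contains an edge $w_1w_2$ with both endpoints adjacent to $v$, and deleting it saves $\alpha>1$ at a cost of essentially one distance unit. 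No swap toward the other component is needed.

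The gaps in your version are concrete. First, your claim that at least $(n-1)/2$ nodes lie on the far side containing $H_2$ is unjustified and can fail: the side containing $H_2$ may well be the small one, and your proposed repair (``otherwise $b$ would have a profitable buy deviation toward $a$'') does not produce the count you need. The correct move is to decide \emph{which} block to deviate from according to which side of the cut vertex is small, as the paper does. Second, the bookkeeping of your swap does not balance. The loss term is of order $d_{H_1}|A^v_e(v)|$, and the inequality you invoke to control $|A^v_e(v)|$ points the wrong way: in an equilibrium, \emph{selling} $e$ yields $\alpha\le (2d_{H_1}-1)|A^v_e(v)|$, a \emph{lower} bound on $|A^v_e(v)|$. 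An upper bound of the usual form $|A^u(v)|\le \alpha/(d_G(u,v)-1)$ comes from a buy deviation by a distant node, but even then the total loss $d_{H_1}\cdot O(\alpha)$ is not dominated by your gain of roughly $(n-1)/2\approx\alpha$ unless $d_{H_1}=O(1)$, which you have not established. Third, your per-node gain $d_G(v,b)-1$ vanishes when $d_G(v,b)=1$, a case you do not exclude. None of this threatens the lemma itself, but it is fatal to the swap-based argument as written; the two-step buy-then-delete argument of the paper avoids all of these difficulties.
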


\begin{proof}
Let us suppose the contrary. Then there exists a cut vertex $v$ and subgraphs $H_1,H_2$ with the following properties: 

(i) $V(G)$ decomposes as  $V(H_1) \cup V(H_2) \cup \left\{ v\right\}$.

(ii) $H_1 \cup \left\{v \right\}$ and $H_2 \cup \left\{ v \right\}$ contain each of them a distinct biconnected component.

(iii) Every shortest path from any two nodes $w_1 \in V(H_1),w_2 \in V(H_2)$ goes through $v$. 

Clearly, we have that $|V(H_1)|+|V(H_2)|+1=n$ and thus there exists $i \in \left\{ 1,2\right\}$  such that $|V(H_i)| \leq (n-1)/2+1$. Let $z$ be any node from $V(H_i)$ with $d_G(v,z) \geq 2$. If $z$ buys a link to $v$, the corresponding cost difference $\Delta C_{buy}$ associated to such deviation must satisfy the following inequality: 

$$ \Delta C_{buy} \leq \alpha - (d_G(v,z)-1)(n-|V(H_i)|) < \frac{n-1}{2}-(d_G(v,z)-1)\left( \frac{n-1}{2}\right) \leq \frac{n-1}{2}-\frac{n-1}{2} = 0$$

A contradiction with $G$ being a \NE graph. Therefore, $d_G(v,z) \leq 1$ for all $z \in V(H_i)$. Since $H_i \cup \left\{ v \right\}$ is biconnected it contains at least one cycle. Therefore, $H_i \cup \left\{ v \right\}$ must contain at least one edge $e=(w_1,w_2) \in E(H_i)$ such that $d_G(v,w_i)=1$ for $i=1,2$. If $w_1$ deletes $e$ then he gets one unit distance further from $w_2$ but he does not increment the distance units to any other node. Then the corresponding cost difference $\Delta C_{delete}$ satisfies the following inequality: 

$$\Delta C_{delete} \leq -\alpha + 1$$

If $\alpha \leq 1$ then $G$ is the star or the clique and the result is clear. Otherwise, then $\Delta C_{delete} < 0$ and therefore we reach a contradiction with $G$ being a \NE graph.

\end{proof}

\begin{proposition}\label{prop:biconnected}
For $\frac{n-1}{2} \leq \alpha < 4n$, if $H$ is any biconnected component of a non-tree \NE graph $G$ then: 
$$ diam(G) < diam(H)+ 994$$ 

\end{proposition}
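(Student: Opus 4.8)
The plan is to bound how far $G$ can stretch beyond $H$ and then convert that into a diameter bound. Since $H$ is a biconnected component, $V(G)=\bigcup_{u\in V(H)}S(u)$ is a partition; for $u\in V(H)$ put $e(u):=\max_{w\in S(u)}d_G(u,w)$, the eccentricity of $u$ inside its pendant part. If $x,y$ realise $diam(G)$ with $x\in S(u)$ and $y\in S(w)$, then every $x$--$y$ path leaves $S(u)$ only through $u$ and enters $S(w)$ only through $w$, so $d_G(x,y)=d_G(x,u)+d_G(u,w)+d_G(w,y)\le e(u)+d_H+e(w)$ when $u\neq w$ (using that a shortest path between two vertices of $H$ stays inside $H$, hence $d_G(u,w)=d_H(u,w)\le d_H$), and $d_G(x,y)\le 2e(u)$ when $u=w$. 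So it suffices to show that every pendant part reaches only a constant distance from its attachment vertex, say $e(u)\le 497$ for all $u\in V(H)$; the constants then combine to give $diam(G)<d_H+994$.

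The light case $|S(u)|\le n/2$ is easy. Pick $z\in S(u)$ with $d_G(z,u)=e(u)$ and assume $e(u)\ge 2$. Let $z$ buy one extra link to $u$, keeping all its current links (so $G$ stays connected): the creation cost grows by $\alpha$, while for each of the $\ge n-|S(u)|\ge n/2$ vertices $w\notin S(u)$ we have $d_G(z,w)=e(u)+d_G(u,w)$, which drops to at most $1+d_G(u,w)$ — a gain of at least $e(u)-1$. Since $G$ is a \NE this deviation is not profitable, so $(e(u)-1)\tfrac n2\le\alpha<4n$ and $e(u)\le 8$.

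The main case, and the only delicate one, is a pendant part $S(u^{\ast})$ with $|S(u^{\ast})|>n/2$; by disjointness of the $S(u)$ there is at most one such $u^{\ast}$. Here the previous trick is too weak, since only $O(1)$ vertices may lie outside $S(u^{\ast})$ and shortcutting to the outside then saves almost nothing; the plan is to exploit the large internal mass of $S(u^{\ast})$ itself. Let $e^{\ast}=e(u^{\ast})$, let $z\in S(u^{\ast})$ attain it, and fix a shortest path $u^{\ast}=q_0,q_1,\dots,q_{e^{\ast}}=z$. For an index $j$, the deviation in which $z$ buys a link to $q_j$ moves every vertex $w$ admitting a shortest $z$--$w$ path through $q_j$ closer by $e^{\ast}-j-1$; one wants to count a constant fraction of $V(S(u^{\ast}))$ among such vertices. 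Making ``the $q_j$-side'' precise is immediate for a tree (subtree sizes); in general one uses the block--cut tree of $S(u^{\ast})$ together with the structural control on biconnected components already obtained (Theorems~\ref{thm:4} and \ref{thm:degbound2-k}), which forbids a biconnected block from concealing a long detour around $q_j$. Taking $j_0$ to be the least index past which at least $|S(u^{\ast})|/2$ vertices of $S(u^{\ast})$ lie on the $q_{j_0}$-side, the deviation at $q_{j_0-1}$ together with the \NE condition forces $j_0=O(\alpha/n)=O(1)$, and — because $z$ is a farthest vertex of $S(u^{\ast})$ from $u^{\ast}$ — every vertex on the $z$-side of $q_{j_0}$ lies within a constant distance $\rho$ of $z$. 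The symmetric argument from the $u^{\ast}$-end places more than $|S(u^{\ast})|/2$ vertices of $S(u^{\ast})$ within distance $\rho$ of $u^{\ast}$. If $e^{\ast}>2\rho$, these two vertex sets are disjoint subsets of $S(u^{\ast})$, each of size $>|S(u^{\ast})|/2$ — a contradiction; hence $e^{\ast}=O(1)$, and tracking the constants through the two ends yields $e^{\ast}\le 497$.

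I expect the obstacle to be exactly this heavy case: running the ``mass on one side of a cut vertex'' argument when $S(u^{\ast})$ is not a tree, i.e.\ certifying that no biconnected block inside $S(u^{\ast})$ can simultaneously keep $z$ far from $u^{\ast}$ and keep most of the mass far from both ends. This is where the degree and size bounds for biconnected components proved in the earlier subsections are essential; by contrast, the reduction step and the light case are routine single-deviation arguments.
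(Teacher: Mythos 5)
Your reduction (writing a diametral path as pendant segment $+$ path in $H$ $+$ pendant segment, so that it suffices to bound $e(u)=\max_{w\in S(u)}d_G(u,w)$ by a constant for every $u\in V(H)$) is exactly the skeleton the paper uses, and your light case $|S(u)|\le n/2$ is a correct single-deviation argument. The problem is the heavy case $|S(u^{\ast})|>n/2$, which is where all the content of the proposition lives, and your treatment of it does not close. First, the tool you invoke to control non-tree structure inside $S(u^{\ast})$ is the wrong one: Theorem \ref{thm:degbound2-k} is vacuous in the present range, since it requires $n/\alpha>K$ while here $\alpha\ge(n-1)/2$ forces $n/\alpha\le 4$; and Theorem \ref{thm:4} only bounds \emph{out-degrees} in a biconnected block. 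A constant out-degree bound does not ``forbid a biconnected block from concealing a long detour'' --- a long cycle has out-degree one everywhere --- so the step on which your whole non-tree case rests is unsupported. Second, the ``$q_j$-side'' partition is not well defined when $S(u^{\ast})$ is not a tree (a vertex may have some shortest paths through $q_j$ and others not), and the monotonicity in $j$ needed to define $j_0$ is not established. Third, the final contradiction needs \emph{both} that more than $|S(u^{\ast})|/2$ vertices lie within constant distance of $z$ \emph{and} that more than $|S(u^{\ast})|/2$ lie within constant distance of $u^{\ast}$; the ``symmetric argument from the $u^{\ast}$-end'' is asserted but the available deviation (buying a link from $z$, or from $u^{\ast}$, to some $q_j$) only rewards vertices \emph{beyond} $q_j$, and knowing that fewer than half the vertices lie beyond $q_{j_0-1}$ does not place the remaining vertices near $u^{\ast}$ --- they may sit at distance up to $e^{\ast}$ along other branches.

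For comparison, the paper does not prove this bound from scratch: it takes Proposition 5 of \cite{Alvarezetal3} (which bounds $\max_{w'\in S(w)}d_G(w,w')$ by $125$ for $n<\alpha<4n$ via a more delicate deviation analysis that does handle non-tree pendant parts), locates the single place where $\alpha>n$ is used (to get $|S(w)|\le\frac{30}{31}n$), reruns that step with $\alpha\ge\frac{n-1}{2}$ to get $|S(w)|\le\frac{123}{124}n$, and tracks the constant to $497$; your reduction step then finishes as you describe. So either import that result as the paper does, or supply a genuine argument for the heavy, non-tree case --- the two specific holes being the well-definedness of the side decomposition and the mass concentration at the $u^{\ast}$-end.
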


\begin{proof}
In Proposition 5 from \cite{Alvarezetal3} it is shown that for $\alpha$ in the range $n < \alpha < 4n$ if $H$ is any biconnected component of a \NE graph $G$ then for any $w\in V(H)$ the maximum distance from $w$ to any node $w' \in S(w)$ is $125$. The only point in which the authors use $\alpha > n$ is in the reasoning for the subcase $(i)$ from which it is deduced that $|S(w)| \leq \frac{30}{31}n$. Thus it is easy to see that if we enlarge the range $n < \alpha < 4n$ to the range $\frac{n-1}{2} \leq \alpha < 4n$ then instead of $|S(w)| \leq \frac{30}{31}n$ we obtain $|S(w)| \leq \frac{61n+1}{62}n \leq \frac{123n}{124} $, where the last inequality comes from the fact that $n \geq 2$. The impact that it has to the rest of the result is that $125$ is changed to $ 497= 1+4 \frac{1}{1-\frac{123}{124}}$.  From here, the conclusion follows easily, since any maximal length path in $G$ is decomposed into two paths of length at most $497$ and a subpath of length at most $diam(H)$. \end{proof}

As a consequence of these observations we reach the following result:

\begin{theorem}\label{thm:biconnected}
There exists a constant $D_2$ such that every equilibrium graph $G$ of diameter larger than $D_2$ admits at most one biconnected component.
\end{theorem}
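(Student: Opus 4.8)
The plan is to dispose of the extreme ranges of $\alpha$ using results that are already available and to reserve the diameter threshold for the intermediate range. If $\alpha > 4n-13$ then every \NE is a tree, so it carries no non-trivial biconnected component and the claim holds vacuously for any choice of $D_2$. If $\alpha < \frac{n-1}{2}$ then Lemma \ref{lem:biconnected1} already guarantees at most one non-trivial biconnected component, with no hypothesis on the diameter. Hence the only case that remains, and the one that forces $D_2$ to be a genuine constant, is $\frac{n-1}{2} \le \alpha \le 4n-13$, which lies inside the interval $[\frac{n-1}{2},4n)$ to which Proposition \ref{prop:biconnected} applies.

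In this intermediate range I would argue by contradiction: assume $G$ is a \NE of large diameter with two distinct non-trivial biconnected components $H_1\neq H_2$ (if $G$ is a tree there is nothing to prove), and show that $\mathrm{diam}(G)$ must then be bounded by an absolute constant. From the block--cut tree structure there is a cut vertex $w\in V(H_1)$ such that $V(H_2)\subseteq S(w)$, where $S(w)$ is taken with respect to $H_1$: removing $V(H_1)\setminus\{w\}$ leaves $H_2$ (possibly together with further parts of the graph, and with $w$ possibly equal to the shared vertex if $H_1$ and $H_2$ meet) hanging from $w$. The proof of Proposition \ref{prop:biconnected} shows --- as the version, valid in the enlarged range $[\frac{n-1}{2},4n)$, of Proposition 5 of \cite{Alvarezetal3} --- that $d_G(w,z)\le 497$ for every $z\in S(w)$. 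Consequently every vertex of $H_2$ is within distance $497$ of $w$, so $d_G(a,b)\le 994$ for all $a,b\in V(H_2)$; and since $H_2$ is a block, distances between two of its vertices are realised inside $H_2$ (any walk leaving $V(H_2)$ must leave and return through the same cut vertex), so $\mathrm{diam}(H_2)\le 994$. Feeding this into Proposition \ref{prop:biconnected} applied to $H_2$ gives $\mathrm{diam}(G) < \mathrm{diam}(H_2)+994 \le 1988$.

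Thus $D_2 = 1988$ works (any larger constant does too): if $\mathrm{diam}(G) > D_2$ then, in the intermediate range, $G$ cannot contain two non-trivial biconnected components, and combined with the two extreme ranges treated above this yields the theorem. The step I expect to need the most care is this intermediate reduction: one must verify that the radius bound of $497$ from \cite{Alvarezetal3} really survives the relaxation of the hypothesis from $\alpha>n$ to $\alpha\ge\frac{n-1}{2}$ --- precisely the adjustment already performed in the proof of Proposition \ref{prop:biconnected} --- and one should be careful with the elementary but easy-to-miss point that the diameter of the block $H_2$ coincides with the maximum $G$-distance between its vertices, so that the bound on $d_G$ can legitimately be substituted into Proposition \ref{prop:biconnected}. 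Everything else is bookkeeping over the case split.
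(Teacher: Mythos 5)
Your proposal is correct and follows essentially the same route as the paper: the same three-way case split on $\alpha$, the same use of Lemma \ref{lem:biconnected1} for small $\alpha$, and the same argument that a second block must sit inside some $S(v)$, hence has diameter at most $2\cdot 497 = 994$, which fed into Proposition \ref{prop:biconnected} bounds $\mathrm{diam}(G)$ by $1988$. Your extra care about distances in a block being realised within the block is a correct (and slightly more explicit) justification of a step the paper states without comment.
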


\begin{proof}

If $\alpha > 4n-13$ then we know that every \NE is a tree. Otherwise, if $\alpha \leq 4n-13$, then either $\alpha < \frac{n-1}{2}$ and $G$ admits at most one biconnected component due to Lemma \ref{lem:biconnected1} or if $\frac{n-1}{2} \leq \alpha < 4n$ then we shall see that the result works for some constant $D_2$. 

Indeed, let $H$ be a biconnected component of a \NE graph $G$ for $\alpha$ with $\frac{n-1}{2} \leq \alpha < 4n$. Looking at the proof of Proposition \ref{prop:biconnected} the maximum distance from $v\in V(H)$ to any node inside $S(v)$ is at most $497$. Therefore, the diameter of any subgraph contained in $S(v)$ with $v\in V(H)$ is at most $2\cdot 497=994$. In this way, if $G$ admits at least another biconnected component $H'$ different from $H$, then $H'$ must be contained inside some subset $S(v)$ with $v\in V(H)$. Hence, by the previous reasoning the diameter of $H'$ is at most $994$. Finally, invoking Proposition \ref{prop:biconnected}, $diam(G) < diam(H')+994 \leq 1998$. so that taking $D_2 = 1988$ the conclusion now is clear.

\end{proof}

\section{Conclusions}

These results regarding distance-uniformity and the number of non-bridge links bought by a single player, extend topological properties previously studied for a narrow interval of $\alpha$ to the whole remaining range of the parameter $\alpha$ for which it is not known whether the \PoA is constant. We knew that the diameter of equilibria is such a crucial quantity related to the \PoA. Now we know that the set of distances from a fixed node $u$ from any \NE graph follow a very restrictive and specific pattern, that does not depend on the choice of $u$. We knew that the average degree of equilibria is at most $2n/\alpha+4$. Now we know a little bit more of information, the number of non-bridge links bought by any node in a \NE graph of diameter large enough is at most $\max(R,2n/\alpha+6)$.

Finally, regarding the main techniques used to prove the strongest results, it has been useful to consider a massive set of deviations having  some sort of symmetry and then, upper and lower bound the sum of the corresponding cost differences. We think that this technique resembles the probability principle that has been used in \cite{Structurebasic}, for the sum basic network creation game. The fact that this technique works quite well in the \sumNCG, too, deserves further study in order to validate the constant \PoA conjecture if it is really true.


\begin{thebibliography}{10}


\bibitem{Alvarezetal}
\`Alvarez, C., Messegu\'e, A.:
\newblock Network Creation Games: Structure vs Anarchy.
\newblock CoRR, 
arXiv:abs/1706.09132, 2017. URL: http://arxiv.org/abs/1706.09132.

\bibitem{Alvarezetalconstant}
\`Alvarez, C., Messegu\'e, A.:
\newblock On the Constant Price of Anarchy Conjecture.
\newblock CoRR, 
arXiv:1809.08027, 2018. URL: https://arxiv.org/abs/1809.08027.


\bibitem{Alvarezetal3}
\`Alvarez, C., Messegu\'e, A.:
\newblock On the Price of Anarchy for High-Price Links. 
\newblock {\em 15th Conference on Web and Internet Economics, WINE 2019.} 316--329, 2019.

\bibitem{Albersetal:06}
Albers, S., Eilts, S., Even-Dar, E., Mansour, Y., Roditty, L.:
\newblock On {N}ash Equilibria for a Network Creation Game.
\newblock {\em {ACM} Trans. Economics and Comput.}, 2(1):2, 2006.



\bibitem{AlonDHKL14}
Alon, N., Demaine, E.D., Hajiaghayi, M., Leighton, T.:
\newblock Basic network creation games.
\newblock {\em {SIAM} J. Discrete Math.}, 27(2):656--668, 2013.





\bibitem{Lenznertree}
Bil{\`{o}}, D., Lenzner, P.:
\newblock On the Tree Conjecture for the Network Creation Game
\newblock  STACS 2018, 14:1--14:15




\bibitem{Demaineetal:07}
 Demaine, E.D., Hajiaghayi, M., Mahini, H., Zadimoghaddam, M.:
\newblock The Price of Anarchy in Network Creation Games.
\newblock In   PODC 2007, pp. 292--298, 2007.




\bibitem{Fe:03}
Fabrikant, A., Luthra, A., Maneva, E.N., Papadimitriou, C.H., Shenker, S.:
\newblock On a Network Creation Game.
\newblock In  PODC  2003, pp. 347--351, 2003.

\bibitem{LaLo:17}
Lavrov, M., Lo, P.S., Messegu\'e, A.:
\newblock Distance-Uniform Graphs with Large Diameter.
\newblock SIAM J. Discrete Mathematics, 33(2), pp. 994-1005.


\bibitem{Mihalaktree}
Mamageishvili, A., Mihal{\'{a}}k, M., M{\"{u}}ller, D.:
\newblock Tree Nash Equilibria in the Network Creation Game.
\newblock {\em Internet Mathematics}, 11(4-5), pp. 472--486, 2015.


\bibitem{Mihalakmostly}
Mihal{\'a}k, M., Schlegel., J.C.: 
\newblock The Price of Anarchy in Network Creation Games Is (Mostly) Constant.
\newblock {\em Theory of Computing Systems}, 53(1), pp. 53--72, 2013.


\bibitem{Mihalakswap}
Mihal{\'a}k, M., Schlegel., J.C.: 
\newblock Asymmetric Swap-Equilibrium: A Unifying Equilibrium Concept for Network Creation Games.
\newblock {\em Mathematical Foundations of Computer Science 2012}, isbn 978-3-642-32589-2, pp. 693--704, 2012.


\bibitem{Structurebasic}
Nikoletseas, S., Panagopoulou, P., Raptopoulos, C., Spirakis, P.G..
\newblock On the Structure of Equilibria in Basic Network Formation
\newblock {\em Fundamentals of Computation Theory 2013}, pp. 259--270, 2013. 

\bibitem{Wei1}
Wei.,V.K.:
\newblock Some Ramsey-type numbers and the independence ratio
\newblock Ph.D. thesis. University of Hawaii, Honolulu, 1980.

\end{thebibliography}
\end{document}